\newtheorem{theorem}{Theorem}[section]
\newtheorem{definition}[theorem]{Definition}
\newtheorem{corollary}[theorem]{Corollary}
\newtheorem{remark}[theorem]{Remark}
\DeclareMathOperator{\EA}{EA}
\begin{document}

\title{{\sc Equivalence Relations in Quantum Theory:\\An Objective Account of Bases and Factorizations}\footnote{This paper is dedicated to the memory of Newton da Costa (1929-2024).}.}

\author{{\sc Christian de Ronde}$^{1,2,3}$, {\sc Raimundo Fern\'andez Mouj\'an}$^{2,4}$, {\sc C\'esar Massri}$^{5,6}$}
\date{}

\maketitle
\begin{center}
\begin{small}
1. Philosophy Institute Dr. A. Korn, University of Buenos Aires - CONICET\\
2. Center Leo Apostel for Interdisciplinary Studies\\Foundations of the Exact Sciences - Vrije Universiteit Brussel\\
3. Institute of Engineering - National University Arturo Jauretche\\
4. Philosophy Institute, Diego Portales University, Santiago de Chile\\
5. Institute of Mathematical Investigations Luis A. Santal\'o, University of Buenos Aires - CONICET\\
6. CAECE University
\end{small}
\end{center}

\begin{abstract}
\noindent In orthodox Standard Quantum Mechanics (SQM) bases and factorizations are considered to define {\it quantum states} and {\it entanglement} in relativistic terms. While the choice of a basis (interpreted as a measurement context) defines a state incompatible to that same state in a different basis, the choice of a factorization (interpreted as the separability of systems into sub-systems) determines wether the same state is entangled or non-entangled. Of course, this perspectival relativism with respect to reference frames and factorizations precludes not only the widespread reference to quantum particles but more generally the possibility of any rational objective account of a state of affairs in general. In turn, this impossibility ends up justifying the instrumentalist (anti-realist) approach that contemporary quantum physics has followed since the establishment of SQM during the 1930s. In contraposition, in this work, taking as a standpoint the logos categorical approach to QM ---basically, Heisenberg's matrix formulation without Dirac's projection postulate--- we provide an invariant account of bases and factorizations which allows us to  to build a conceptual-operational bridge between the mathematical formalism and quantum phenomena. In this context we are able to address the set of equivalence relations which allows us to determine what is actually {\it the same} in {\it different bases} and {\it factorizations}.  
\end{abstract}
\begin{small}

{\bf Keywords:} {\em bases, factorizations, invariance, quantum mechanics, realism.}
\end{small}

\bigskip

\bigskip

\bigskip

\bigskip

\bigskip

\bigskip

\bigskip

\section*{Introduction}

Since its formulation in the 1930s physicists have become accustomed to the idea that the reference to a {\it state} in Standard Quantum Mechanics (SQM) requires, to be meaningful, the specification of a particular experimental context or ``preferred'' basis (given by a complete set of commuting observables). This widely accepted perspectival relativization of the notion of state of a physical system has immense consequences, for it precludes the possibility in quantum theory to refer to a state of affairs as independent of the particular choice of a reference frame (a condition that one would believe is necessary for any objective physical representation). In contradistinction, factorizations in SQM have been misleadingly conceived in atomistic substantivalist terms, as informing the separation of quantum systems into subsystems ---disregarding the fact that this entails an operation provenly inconsistent with the mathematical formalism itself. Let us add that this understanding of factorizations linked to the notion of entanglement has resulted also in a problematic relativistic view according to which the choice of a factorization determines wether a state is entangled or not (see for a detailed discussion \cite{deRondeMassri23}).

But are these sacrifices necessary? Is that renunciation to the conditions of objectivity in physics in fact unavoidable? We would like to show that both the account of bases and of factorizations mentioned above have been generated due to the projection of presupposed classical concepts that have wrongly determined, right from the start, the understanding of the formalism and of quantum phenomena. However, by recognizing and escaping these inadequate projections, it is possible to restore an account of QM that is operationally invariant, that allows to account for a state of affairs independent of reference frames, and that opens the door for the development of concepts specifically adequate to this mathematical formalism \cite{deRondeMassri19a, deRondeMassri21a}. This conceptual account must be grounded on the \emph{intensive} nature of the elements pointed (already in Heisenberg's matrix mechanics) as invariant by the formalism itself. In this work we attempt, from a formal and conceptual reframing of the theory, to provide the operational conditions that, bridging the gap between the mathematical formalism and the conceptual representation of experience, will allow us to understand what is actually observed according to the theory of quanta. In short, the main attempt of this article is to provide a consistent and coherent formal, conceptual and operational redefinition of the notions of {\it basis} and {\it factorization} in invariant and objective terms.

\section{The Relativist Account of Bases}

A certain relativism seems to be implied in almost all accounts of QM. A relativism we believe was introduced in QM by no other than Niels Bohr. This was done through the redefinition of the notion of {\it measurement} within the theory of quanta in terms of the interaction between quantum and classical systems. A new interaction ---that implied the experimental choice of the observer--- which was made a fundamental aspect of the theory itself. Of course, as it is well known, measurement had never been part of any  physical theory before Bohr. Measurement in physics had always been understood as part of an epistemic praxis which presupposed a human subject capable to produce a link between the knowledge of an already ``closed'' theory and the highly complex technical experience produced in the lab. As Einstein would tell Heisenberg \cite{Heis71}: ``it is only the theory which decides what can be observed.'' The {\it moment of unity} of a theory, that which the theory talks about, is ---for the realist--- something to be defined through invariance and objectivity {\it within} the theory itself. It is in this way that classical mechanics generates the formal-conceptual notion of `particle' and electromagnetism is able to come up with the notion of `electromagnetic wave'. In short, we do not observe a unity given to us in experience ---we do not observe an electromagnetic wave--- but a multiplicity of phenomena ---a lightning or two rocks attracting each other--- that need to be united through mathematical formalisms and concepts. As Heisenberg would explain: 
\begin{quote}
\noindent {\small ``The history of physics is not only a sequence of experimental discoveries and observations, followed by their mathematical description; it is also a history of concepts. For an understanding of the phenomena the first condition is the introduction of adequate concepts. Only with the help of correct concepts can we really know what has been observed.''   \cite[p. 264]{Heis73}}
\end{quote}
But, instead of introducing the concepts specifically adequate to the quantum formalism ---and capable of producing an understanding of the observations coherent with that formalism--- Bohr would impose not only a reference to `waves' and `particles' that defined a necessary restriction to classical concepts \cite{Howard94}, but also ---since quantum phenomena escaped a consistent representation in terms of particles or waves--- a contextual dependency of this reference to the specific experimental setup. According to Bohr, that reference ---as repeatedly addressed within the double slit experiment--- was explicitly dependent on the choice of the specific measurement situation (see for a detailed analysis \cite{deRonde20}). Contrary to all pre-Bohrian physics, the nature of the state of affairs would then come to depend on the choice of the experimental context, and consequently, each different measurement would be understood as referring to a different (context-dependent) state of affairs. The complementary nature of quantum entities would then preclude a common consistent reference to {\it the same} state of affairs. In short, Bohr's new understanding of measurement would define for each measurement context a different ``classical'' state of affairs, establishing a relativist view of quantum phenomena. And it is exactly this same relativist operation that would be reproduced by Dirac in his redefinition of the notion of (quantum) {\it state} in terms of a preferred reference frame (or basis). In his famous book \cite{Dirac74}, he would extend complementarity from Bohr's conceptual dualistic reference to `waves' and `particles' to the formal level of vectors and bases.\footnote{It is important to notice that the notion of `observable' was introduced by Dirac to refer to linear operators (in the first edition from 1930) and to real functions of dynamical variables (in the second edition from 1935).} In this way, the English engineer and mathematician would drastically subvert the meaning of {\it state of a system} by imposing a direct link between vectors in specific bases ---namely, those in which a vector is written as a single ket, $| x \rangle$--- and actual observations of single `clicks' in detectors. It is this re-interpretation of the notion of {\it state} in terms of a measurement outcome, mathematically represented by a {\it ket} vector, which would impose the ---today widely accepted--- idea that {\it the same system} could be now represented in terms of {\it different states} depending on the specific basis (i.e., the reference frame).
\begin{quotation}
\noindent {\small ``[E]ach state of a dynamical system at a particular time corresponds to a ket vector, the correspondence being such that if a state results from the superposition of certain other states, its corresponding ket vector is expressible linearity in terms of the corresponding ket vectors of the other states, and conversely. Thus the state $R$ results from a superposition of the states $A$ and $B$ when the corresponding ket vectors are connected by $ | R \rangle = c_1 | A \rangle + c_2 | B \rangle $.'' \cite[p. 16]{Dirac74}}
\end{quotation}
This is an essentially inconsistent account of the notion of state which not only precludes its basic meaning as referring to something independent of any basis, but also mixes the basis dependent definition of state with the abstract reference to vectors. The state $ | R \rangle$ is {\it different} to the state $ | A \rangle$ and {\it different} to the state $ | B \rangle$, which means that when considering  different bases  or experimental situations we obtain different states, different outcomes. However, the state  $ | R \rangle$ is also considered as being {\it the same} as the sum of the states $ | A \rangle$ and $ | B \rangle$. Thus, through the identification of states with singular outcomes (taken to be expression of specific particles), each of the states is completely `different' from the others, and yet they are also `the same'. How is this paradox created? As discussed in detail in \cite{deRondeMassri22a}, the contradiction is built trough the implicit reference to two different (inconsistent) definitions of the notion of state. The first basis-dependent definition is in terms of a single term {\it ket} which, in turn, is also linked to the observation of a single outcome ---i.e., each state  $| k \rangle$ generates a specific measurement outcome `k'.
\begin{definition}[Operational Purity] Given a quantum system in the state $|\psi \rangle$, there exists an experimental situation linked to that basis (in which the vector is written as a single term) in which the test of it will yield with certainty (probability = 1) its related outcome. 
\end{definition}
\noindent In this case each different {\it ket} (e.g., $ | R \rangle$, $ | A \rangle$ or $ |B \rangle$) is interpreted as a different {\it state}. However, there is also an implicit reference to a completely different notion of state in terms of the `abstract vector' which grounds these different basis dependent representations. But notice that this later definition is independent of bases. In this case, $ | R \rangle$ and $c_1 | A \rangle + c_2 | B \rangle $ are just particular basis-dependent representations of {\it the same} abstract vector $\Psi$. 
\begin{definition}[Abstract Purity]\label{pure} An abstract unit vector (with no reference to any basis) in Hilbert space, $\Psi$, is a pure state. In terms of density operators $\rho$ is a pure state if it is a projector, namely, if Tr$(\rho^2) = 1$ or $\rho = \rho^2$. 
\end{definition}
\noindent As demonstrated explicitly in \cite{deRondeMassri22a}, while the first definition is operational but non-invariant, the latter mathematically abstract definition is invariant but has no operational content. Furthermore, they are not equivalent, while the first definition implies the second the inverse does not hold (for a more detailed analysis see \cite{deRondeMassri22b}).

It is interesting to notice that it is this self-contradictory relativist account of state proposed by Dirac which would be applied by Bohr a few years later in his famous reply to EPR. As Bohr \cite[p. 696]{Bohr35} would argue, the condition to consider a subset of observables as definite valued required the specification of the reference frame (i.e., the measurement context or basis): ``it is a well-known feature of the present formalism of quantum mechanics that it is never possible, in the description of the state of a mechanical system, to attach definite values to both of two canonically conjugate variables''. In this way, complementarity would become extended in terms of the contextual reference to states and bases.
\begin{quotation}
\noindent {\small ``the renunciation in each experimental arrangement of the one or the other of two aspects of the description of physical phenomena, ---the combination of which characterizes the method of classical physics (...)--- depends essentially on the impossibility, in the field of quantum theory, of accurately controlling the reaction of the object on the measuring instruments, i.e., the transfer of momentum in case of position measurements, and the displacement in case of momentum measurements.''  \cite[p. 699]{Bohr35}} 
\end{quotation}   

Ever since, it has been dogmatically accepted that the reference to a {\it state} in QM requires the specification of a basis. Different bases can be compared but only in a complementary (i.e., inconsistent) fashion. Thus, quantum properties (i.e., projection operators) cannot be assigned a {\it global valuation} and the description of a state of affairs becomes intrinsically {\it relative} to the single (preferred) context (or basis) in which the measurement is actually performed. This has led many to argue that quantum states must be understood as essentially relative \cite{Rovelli96} or perspectival \cite{Dieks22} notions (see for a detailed discussion and analysis \cite{deRondeFM18}). As a consequence, it is commonly argued that ``the properties of a system are different whether you look at them or not'' \cite{Butterfield17}.



\section{The Substantialist Account of Factorizations}

The orthodox account of QM established during the 1930s is grounded on Bohr's atomistic interpretation of the theory according to which there is a microscopic realm composed of elementary particles which, even though are actually responsible for the very existence of `tables' and `chairs' cannot be represented because of the ``uncontrollable disturbance'' between quantum and classical systems ---due to the {\it quantum of action}--- within the measurement interaction. Dirac's work would reinforce this substantialist reference to particles within his intrinsically instrumentalist development of the theory. In fact, the unjustified reference to particles would be used by Dirac in order to justify the shift from the observation of {\it intensive patterns} ---which Heisenberg's had considered within his matrix QM--- to the unilateral attention to single `clicks' in detectors regarded ---following Bohr's ``commonsensical'' reasoning--- as the obvious consequence of (presupposed) ``microscopic particles''. This presupposition also resulted in Dirac's {\it ad hoc} introduction of the ``collapse'' process, imposed in order to bridge the gap between the intensive values present within quantum superpositions and the observation of the single measurement outcome ---that had to be taken as the main experience to explain, since it expressed the presence of that already presupposed `particle'.  
\begin{quotation}
\noindent {\small ``When we make the photon meet a tourmaline crystal, we are subjecting it to an observation. We are observing whether it is polarized parallel or perpendicular to the optic axis. The effect of making this observation is to force the photon entirely into the state of parallel or entirely into the state of perpendicular polarization. It has to make a sudden jump from being partly in each of these two states to being entirely in one or other of them.'' \cite[p. 7]{Dirac74}}
\end{quotation}   
It is at this point, that we need to recognize that the unjustified reference to particles was never ``just a way of talking'' \cite{deRondeFM21} but ---on the very contrary--- implied from the beginning a series of essential methodological operations that would even determine the formalism of the theory itself. As Faraday explained long ago: “the word \emph{atom}, which can never be used without involving much that is purely hypothetical, is often intended to be used to express a simple fact; but good as the intention is, I have not yet found a mind that did habitually separate it from its accompanying temptations” \cite[p. 220]{Laudan81}. Schrödinger rephrases this idea for the quantum case: “We have taken over from previous theory the idea of a particle and all the technical language concerning it. This idea is inadequate. It constantly drives our mind to ask information which has obviously no significance” \cite[p. 188]{Schr50}. It is not difficult to understand that if one  dogmatically applies a series of categorical principles ---such as those of particle metaphysics, e.g., separability, individuality, locality, etc.--- to a mathematical formalism that was never meant to be understood under the constraints of such representation, the result of this methodology can only lead to paradoxes and dead ends. An excellent example of how pseudo-problems can be built by following this methodology ---where dogma is imposed over reason and experience--- is the substantialist account of factorizations, that in SQM are interpreted as the ``separability of systems into sub-systems'', an interpretation put forward by the Russian mathematician and physicist Lev Landau during the late 1920s. Landau's account of factorizations imposed the dogmatic application of the principle of separability derived from the metaphysics of particles inherent to classical physics. In his work from 1927 the projection of subspaces was incorrectly interpreted in terms of the ``separation of a system into sub-systems''. Importing the classical way a reasoning about classical systems to QM, Landau would write in \cite{Landau27} about ``coupled systems'' and the necessity of a ``probability ensemble''. According to Landau \cite[p. 8]{Landau27} : ``A system cannot be uniquely defined in wave mechanics; we always have a probability ensemble (statistical treatment). If the system is coupled with another, there is a double uncertainty in its behaviour.'' He pointed out that if a system is coupled with another, then the uncertainty grows. If the first system can be described by quantities $\{a_i\}$, 
\[
|x\rangle = \sum_i a_i|x_i\rangle
\]
and the second system by $\{b_j\}$, 
\[
|x'\rangle = \sum_j b_j|x'_j\rangle,
\]
then the two systems together are described by $\{c_{ij}=a_ib_j\}$,
\[
|xx'\rangle = \sum_{ij} a_ib_j |x_ix_j'\rangle.
\]
If the two systems are coupled, the coefficients $c_{ij}$ depend on time and can no longer be resolved as a product. Now, let us see why this interpretation is completely meaningless within the formalism of QM.  

The notion of separability is grounded on the modern metaphysical representation provided by classical physics according to which physical reality is composed of independent separated individual entities which exist within space and time. According to this supposedly ``commonsensical'' picture, a system can be understood in terms of its parts and the knowledge of these parts implies the knowledge of the whole individual. This is of course a direct consequence of the underlying Boolean logic that is a prerequisite to define the notion of {\it actual entity} in terms of the principles of existence, non-contradiction and identity ---also grounding classical mechanics. Indeed, as a consequence, the propositions derived from classical mechanics can be arranged in a Boolean lattice (see for a detailed discussion \cite{deRondeFreytesDomenech18}). According to classical logic, and following set theory, the {\it sum} or {\it union} of the elements of a system imply its complete characterization as a whole. It is easy to picture all of this in terms of sets and the logical relations that we learned in primary school: 
\begin{center}
\includegraphics[scale=.35]{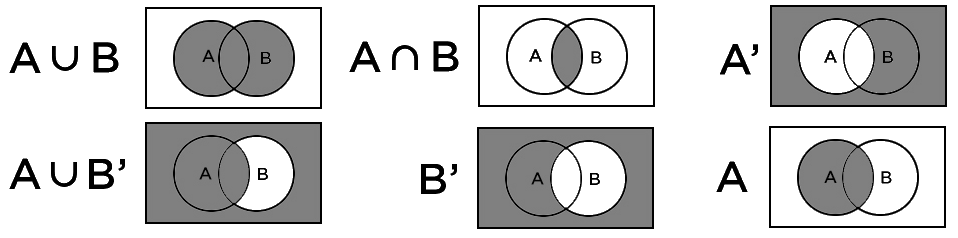}
\captionof{figure}{{\it Union}, {\it intersection} and {\it complement} in Boolean logic.}
 \end{center}
\noindent However, as it is also well known since the famous paper by Birkhoof and von Neumann \cite{BvN36}, the underlying logic of QM is not Boolean, it is not {\it distributive}. Thus, the basic classical way of reasoning about systems becomes precluded right from the start. This is an obvious consequence of the fact that vectorial spaces do not relate between each other following the same rules as the elements of a set through {\it union} and {\it conjunction}. In the quantum case the equivalent to the {\it union} of two vectors is not the {\it sum} of the individual vectors considered as lines, but instead what they are capable to {\it generate} in terms of subspaces. Thus, in the particular case when we consider the {\it sum} of two vectors what we obtain is the whole plane.
\begin{center}
\begin{tabular}{ccc}
\includegraphics[scale=.45]{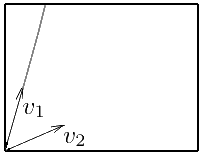} &
\includegraphics[scale=.45]{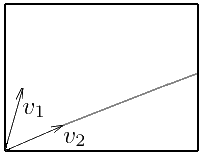} &
\includegraphics[scale=.45]{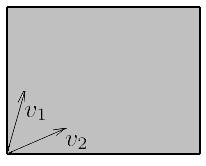} \\
$\langle v_1\rangle$  &
$\langle v_2\rangle$ &
$\langle v_1\rangle\oplus\langle v_2\rangle$
\end{tabular}
\end{center}

As it could have been easily foreseen, the artificial {\it ad hoc} introduction of a set of logical relations completely alien and even incompatible with the mathematical formalism of the theory could only lead to confusions, contradictions and pseudo-problems. Sadly enough, this is exactly what happened with the introduction of the notion of separability in the context of QM. As we just explained, the \emph{union} of two vectors was inadequately understood as a \emph{sum} when, in fact, it is a \emph{generation}. Analogously, the \emph{projection} of a subspace was incorrectly interpreted as a {\it separation} of the whole set and the choice of a subset of elements, when, as a matter of fact, its correct interpretation is that of {\it shadow} \cite{deRondeMassri23}. In figure 2 we can see that while the shadow of $|\Psi\rangle$ in the $x$-axis is the vector $|x\rangle$, the shadow of $|\Psi\rangle$ in the $y$-axis is the vector $|y\rangle$\\ 
\begin{center}
\includegraphics[scale=.3]{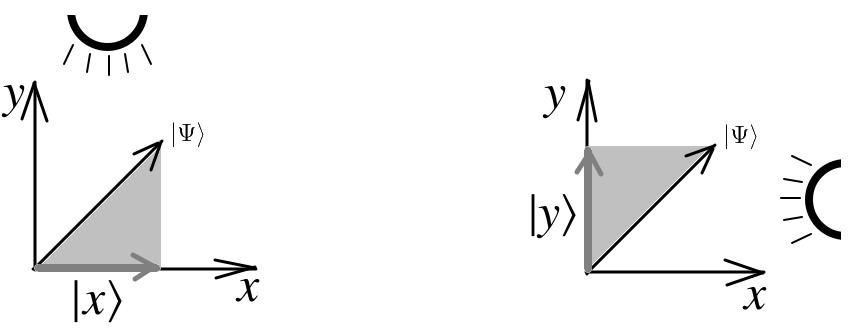} 
\captionof{figure}{The {\it shadow} of $|\Psi\rangle$ in the $x$-axis, $|x\rangle$, and in the $y$-axis, $|y\rangle$.}
\end{center}

The concept of ``subsystem'' and that of ``product state'' lead to the set-theoretic idea that the ``system'' is the union of its parts ---which, as it is very well known, is not the case in the quantum formalism and vectorial spaces where there are no `parts' (i.e., elements of a set) but `shadows' (i.e., projection of subspaces). To say that the projections of subspaces should be understood ---following the classical line of reasoning--- as ``subsystems'' in SQM leads then to essential contradictions. In the early 1980s Diederik Aerts would derive a non-separability theorem in the context of quantum logic which demonstrated that quantum systems are essentially non-separable \cite{Aerts81, Aerts84b}. Later on, in the mid-1990s Rob Clifton would show in \cite{Clifton95} the inconsistencies of imposing the notion of separability to the orthodox formulation. Taking into consideration the following states: 
\begin{align*}
|\psi\rangle & = \sqrt{w_1}|\text{Decay after time }t\rangle_{\text{Atom}}\otimes |\text{Dead}\rangle_{\text{Cat}} \\
&+ \sqrt{w_2}|\text{No decay after time }t\rangle_{\text{Atom}}\otimes |\text{Alive}\rangle_{\text{Cat}} 
\end{align*}
for $w_1+w_2=1$, $0<w_1<w_2<1$
he would show how it is easy to find a contradiction when attempting to consider the definite values of states. Clifton begins by making explicit two central claims to which any realist interpretation should subscribe:
\begin{itemize}
\item[{\bf MI1:}] Schr\"odinger's cat is definitely alive or dead despite its entanglement in $|\psi\rangle$.
\item[{\bf MI2:}] At any given time, only a proper subset of all possible properties of a given system corresponds to the definite properties it actually possesses.
\end{itemize}
Let $\text{Def}_{|\phi\rangle}(S)$ be the set of particular intrinsic properties a system $S$ possesses in the state
$|\phi\rangle$, let $\{P(S)\}$ be the set of all properties of $S$ and let $P_{|\phi\rangle}$ be
the projection operator with one dimensional range generated by the vector $|\phi\rangle$. Then, 
we can translate the assertions mathematically as
\begin{itemize}
\item[{\bf MI1:}] $\{P_{|\text{Alive}\rangle_{\text{Cat}}}, P_{|\text{Dead}\rangle_{\text{Cat}}}\}\subseteq \text{Def}_{|\psi\rangle}(\text{Cat})$
\item[{\bf MI2:}] $\text{Def}_{|\psi\rangle}(\text{Atom + Cat})\subseteq \{P(\text{Atom + Cat})\}$.
\end{itemize}
And using the standard formalism of QM we obtain the following, 
\begin{itemize}
\item[{\bf CP1:}] If $\text{Tr}(P|\psi\rangle)=1$ then $P\in \text{Def}_{|\psi\rangle}(\text{Atom + Cat})$.
\item[{\bf CP2:}] $\text{Def}_{|\psi\rangle}(\text{Atom + Cat})$ is an ortholattice.
\item[{\bf CP3:}] If $P\in\text{Def}_{|\psi\rangle}(\text{Cat})$ then $I\otimes P \in \text{Def}_{|\psi\rangle}(\text{Atom + Cat})$.
\end{itemize}
Clifton proves that  {\bf MI1} negates {\bf MI2}, hence both claims cannot coexist. Notice that {\bf CP3} formalizes the fact that the intrinsic properties of a composite system should at least include the intrinsic properties of its parts. Then he provides an example of the inconsistency leading to the lack of this necessary condition to account for systems and their parts in any meaningful manner: 
\begin{quotation}
\noindent {\small ``the intrinsic properties of a composite system should at least include the intrinsic properties of its parts. For example, whenever its true (respectively, false) that `the left-hand wing of the 747 has the property of being warped', then it must surely (classically, anyways!) also be true (respectively, false) that `The 747 has the property that its left-hand wing is warped'. It may seem that the difference between these two propositions is inconsequential; but in fact that we take the former to entail the latter makes all the difference to whether we are confident flying the 747!'' \cite{Clifton95}}
\end{quotation}

Maybe a simple image can expose why the attempt to understand shadows as subsystems is obviously misleading. 
\begin{center}
\includegraphics[scale=.15]{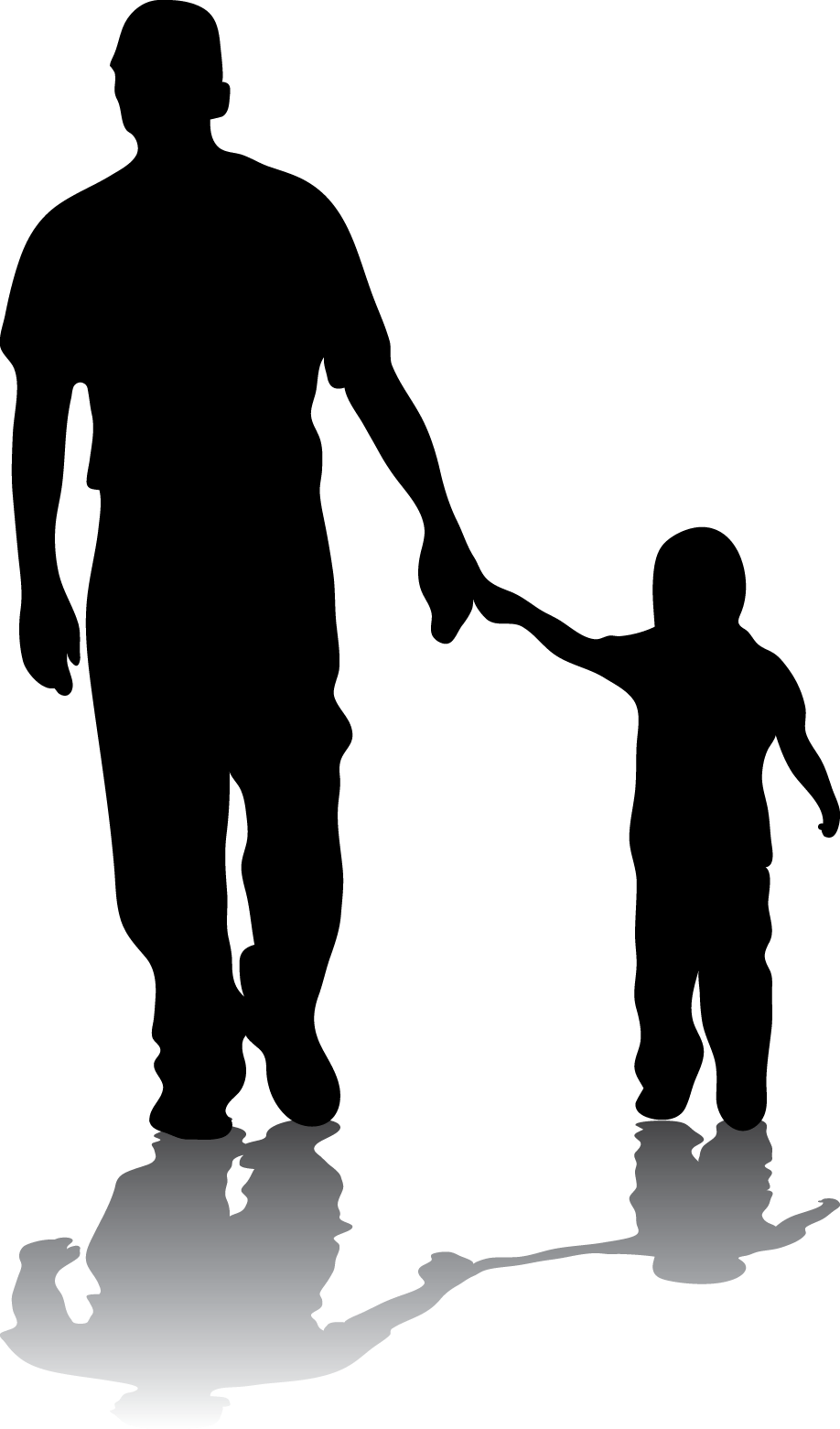}\\
{\small ``Dad, those are our subsystems, right?''}
 \end{center}

\section{The Logos' Realist Approach to QM: Operational-Invariance}

The relativization of the notion of physical state in SQM has precluded the understanding of physics as a discipline which is capable of producing an invariant-objective representation of a state of affairs independent of reference frames or empirical viewpoints, opening the door to the establishment of the 20th century instrumentalist account of physics. This is expressed in the widely accepted claim according to which, since we cannot understand what the theory talks about nor produce a detached representation of physical reality, we must content ourselves with a pragmatic approach to the quantum formalism understood as a ``recipe'' to predict measurement outcomes. As Bohr \cite{Bohr60} would state explicitly: ``Physics is to be regarded not so much as the study of something a priori given, but rather as the development of methods of ordering and surveying human experience.'' Today, this continues to be the orthodox contemporary understanding of the theory of quanta. As recently described by Tim Maudlin: 
\begin{quotation}
\noindent {\small ``What is presented in the average physics textbook, what students learn and researchers use, turns out not to be a precise physical theory at all. It is rather a very effective and accurate recipe for making certain sorts of predictions. What physics students learn is how to use the recipe. For all practical purposes, when designing microchips and predicting the outcomes of experiments, this ability suffices. But if a physics student happens to be unsatisfied with just learning these mathematical techniques for making predictions and asks instead what the theory claims about the physical world, she or he is likely to be met with a canonical response: Shut up and calculate!'' \cite[pp. 2-3]{Maudlin19}} 
\end{quotation} 

However, most of these claims regarding the impossibility of theoretical representation are just consequence of imposing a dogmatic atomist picture. In fact, in QM it is ---and always has been--- possible to determine theoretically a state of affairs independent of reference frames. Right from the beginning ---in Heisenberg’s matrix mechanics--- QM was constructed in invariant terms  allowing to determine, within its formalism, physical quantities that would remain the same for different reference frames. This invariant character of the formalism has been however generally ignored or casted aside, due to the projection of classical concepts, incompatible with the quantum formalism. Instead of developing the new concepts adequate to the invariant content of the formalism, the efforts were directed ---first by Sch\"odinger and later on by Dirac--- towards an (impossible) restoration of a “common sense” view of the physical world, which was classical, and which was believed to be the only reasonable one. This mandatory projection of an understanding incompatible with QM has made us leave aside that invariant aspect of the formalism. But let us explain this further. 

In principle, at least in formal terms, it is operational-invariance which defines what has to be considered as \emph{the same} in a physical theory. The essential role played by invariance consists in that it allows to consider experience from different perspectives in a consistent manner as referring to the same state of affairs. In turn, as Einstein constantly remarked, this makes possible to address the existence of a state of affairs as detached from particular empirical observations and reference frames. Given a mathematical formalism of a theory which has an invariant transformation, it does not really matter which particular reference frame we might choose to describe the state of affairs simply because there will be a consistent translation between any of them. We can go from one representation to another without loosing consistency and coherency regarding what is considered to be {\it the same} state.\footnote{Of course, thsi does not mean that the properties considered from different reference frames will possess the same values (e.g., position and velocity in classical mechanics), it means there will exist a consistent translation of these properties from one reference frame to the other.} This is surely realized in Newtonian mechanics, in Maxwell’s electromagnetism, and also —via the Lorentz’ transformations— in relativity theory. But is it possible to find invariance in QM? Contrary to what is usually believed, the answer is yes, and ---as we already mentioned above--- right from the start, in Heisenberg’s matrix mechanics. 

For some years, Heisenberg had followed Bohr’s guide, focusing on the question of describing the trajectories of electrons inside the atom. But the critical reaction of Wolfgang Pauli and Arnold Sommerfeld led him to think that he should take a different path \cite{BokulichBokulich20}. So, instead of trying to describe trajectories of unseen, presupposed, corpuscles, Heisenberg reframed the problem in terms of observable quantities. As explained by Jaan Hilgevoord and Joos Uffink \cite{HilgevoordUffink01}: “His leading idea was that only those quantities that are in principle observable should play a role in the theory, and that all attempts to form a picture of what goes on inside the atom should be avoided. In atomic physics the observational data were obtained from spectroscopy and associated with atomic transitions. Thus, Heisenberg was led to consider the ‘transition quantities’ as the basic ingredients of the theory.” That same year, he would present his groundbreaking results in the following manner \cite{Heis25}: “In this paper an attempt will be made to obtain bases for a quantum-theoretical mechanics based exclusively on relations between quantities observable in principle.” Emancipating himself completely from the atomist picture, Heisenberg was able to create a completely new mathematical formalism. As he would recall in his autobiography:
\begin{quotation}
\noindent {\small ``In the summer term of 1925, when I resumed my research work at the University of G\"ottingen ---since July 1924 I had been {\it Privatdozent} at that university--- I made a first attempt to guess what formulae would enable one to express the line intensities of the hydrogen spectrum, using more or less the same methods that had proved so fruitful in my work with Kramers in Copenhagen. This attempt lead me to a dead end ---I found myself in an impenetrable morass of complicated mathematical equations, with no way out. But the work helped to convince me of one thing: that one ought to ignore the problem of electron orbits inside the atom, and treat the frequencies and amplitudes associated with the line intensities as perfectly good substitutes. In any case, these magnitudes could be observed directly, and as my friend Otto had pointed out when expounding on Einstein's theory during our bicycle tour round Lake Walchensee, physicists must consider none but observable magnitudes when trying to solve the atomic puzzle.'' \cite[p. 60]{Heis71}}
\end{quotation}

Heisenberg was capable of developing matrix mechanics following two ideas: first, to leave behind the classical notion of particle-trajectory, as it did not seem required by QM —and it rather appeared as a classical habit that was inadvertently coloring a priori the approach to the understanding of the new theory—, and second, to take as a methodological standpoint Ernst Mach’s positivist idea according to which a theory should only make reference to what is actually observed in the lab. And what is actually observed is well known to any experimentalist: a spectrum of line intensities. This is what was described by the tables of data that Heisenberg attempted to mathematically model and that finally led him —with the help of Max Born and Pascual Jordan— to the development of the first mathematical formulation of the theory of quanta. Let us stop to take note once again of some of the conditions that were fundamental for the development of the quantum formalism. First, Heisenberg’s abandonment of Bohr’s atomist narrative and research program which focused in the description of unobservable trajectories of presupposed yet irrepresentable quantum particles. Second, the consideration of Mach’s observability principle as a methodological standpoint that ---even if Heisenberg didn't fully embraced the positivist credo--- allowed him to find a starting point unburdened of those classical presuppositions. That methodological standpoint led him finally to the replacement of Bohr’s fictional trajectories of irrepresentable electrons by the consideration of the intensive quantities appearing in line spectra that were actually observed in the lab. And these quantities, once detached from a supposedly necessary reduction to atomic elements, were what the formalism was indicating as invariant. Radically new, and of fundamental importance to produce a consistent and invariant quantum formalism, was this idea that we should accept intensive values as basic, as perfectly good “substitutes”, that are in no need whatsoever to be reconduced to binary values. Intensities appeared as basic and sufficient. But Heisenberg's intuition, according to which we should take as mainly significant the intensive patterns, was mostly discarded. As we explained earlier, the replacement of intensive patterns by single outcomes as what was most meaningful took place especially in Dirac's work, and as a consequence of a dogmatic presupposition: we must refer to particles, and those single outcomes must be the expression of specific particles.

Perhaps it may also be of interest to consider an unexpected conclusion emerging from what we have just discussed. The orthodox understanding of current physics is largely justified on an empiricist foundation, wherein the theoretical work of physics must always start from observations, considered as uncontaminated data, stripped of any theoretical presupposition. And it is on this foundation, on the supposedly theoretically innocent nature of observations, that the objectivity of physics is somehow justified. Beyond the problems behind this idea of physics (especially in its conception of observation), the truth is that what has been said so far seems to suggest that this condition may not have been met, as commonly believed. One could argue that the only one who remained firmly rooted in what is actually observed, the only radical empiricist, the only one who managed to shed inadequate assumptions and consider anew what was actually observed (and, from that, develop the formalism of quantum mechanics), was Werner Heisenberg. He managed to free himself from the atomistic narrative that imposed a series of dogmatic presuppositions and could thus confront what was observed directly, concluding in the invariance of intensities suggested by his matrix mechanics. From all of this, it becomes apparent that what happened afterward was, in large part, a gradual departure from the radical attention to what is actually observed. By reinstating the atomistic dogma and reintroducing the fundamentals of classical concepts, invariance was destroyed, and a series of presuppositions were imposed on observations that were not only foreign to them but, even worse, were incompatible with them. The history of quantum mechanics seems to describe, after Heisenberg's matrix mechanics, an increasing departure from the focus on what is observed. Less empiricism, not more ---at least if we decide to take it seriously.

In any case, a physical theory is not only an invariant mathematical formalism. Equally fundamental is the development of a conceptual representation that allows us to qualitatively understand what is physically real according to the formalism, and to give meaning to the observations that are predicted by the formalism. Together with formal invariance, conceptual objectivity is a fundamental aspect of any physical theory. But this conceptual representation cannot be an arbitrary addition. An objective conceptual representation has to be developed in strict accordance with the conditions established by the formalism. Specifically, it should be grounded on the {\it moment of unity} produced through the invariant content of the formalism. Constructing, in turn, an objective conceptual representation of what appears as invariant, producing the physical concepts that correspond to the invariant aspects of the formalism. If, as we said, what appears as invariant are the intensive quantities, and if this invariance is lost when we attempt to redirect intensities to binary values, we should start by producing the concept of an originally intensive element of physical reality, which is sufficient, which does not entail the redirection to other elements that could be understood in a binary manner. In this respect, we propose the concept of \emph{power of action}, or \emph{intensive power}, which represents those invariant intensive quantities present in the quantum formalism (without the need of adding any projection postulate). The intensive ---or probabilistic--- value of each power is termed its \emph{intensity} or \emph{potentia}. If we think about it, in fact the reference to a physical reality of an intensive nature is already present in the way Max Planck formulated his original discovery: the \emph{quantum of action}. Action surely is no particle; action represents a reality perhaps more intuitively conceived in intensive terms. The reference is also found in configuration space which, in fact, is nothing but a space which encapsulates degrees of action. In this representation, QM talks about ``action'', about powers of action, quantitatively defined by their intensity or potentia. This representation entails a rejection of the atomistic image of the world, that forces us constantly to take the probabilistic values as a diminished, insufficient representation, as evidently not the ‘real thing’, and it also entails that it is not necessary to redirect those intensive quantities in each case to binary values in order to determine a state of affairs. Let us add that the notion of a physical element that is in itself intensive implies a different understanding of observation. For instance, in the particular experimental situations where we obtain a single outcome at a time, it is only possible to obtain a measure of the physical element considered within the theory through repetition. This means that, in those situations, a single outcome is not what is mainly meaningful, but, on the contrary, an insufficient information, a minimal and partial measure. This is an example of how the understanding of observation is conditioned by theoretical presuppositions ---contrary to the contemporary widespread empiricist understanding according to which observations, taken as pure, uninterpreted data, define an uncontaminated starting point.

When we stick to the intensive values of action, we are able in fact to refer to a state of affairs that is independent of the particular representation in a reference frame (or basis), escaping thus the relativism with which most accounts of QM have contented themselves. In contraposition to an  {\it Actual State of Affairs} (ASA), defined classically in terms of a set of true definite valued binary properties, we propose to relate the reference of QM to an {\it Intensive State of Affairs} (ISA) ---also called elsewhere  {\it Potential State of Affairs} (PSA) \cite{deRondeMassri21a}. What has been demonstrated is that by considering an {\it intensive}, rather than {\it binary}, state of affairs, it is possible to restore a consistent global valuation for all projection operators independently of the basis. Let us recall some results from \cite{deRondeMassri21a}. While a \emph{Global Binary Valuation} (GBV) is a function from a graph to the set $\{0,1\}$, a \emph{Global Intensive Valuation} (GIV) is a function from a graph to the closed interval $[0,1]$. We term projection operators as {\it intensive powers}.\footnote{For a detailed introduction, analysis and discussion of the notion of `intensive power' we refer the interested reader to \cite{deRonde16}, and more specifically, \cite[Sect. 8]{deRondeMassri21a} and \cite[Sect. 3]{deRondeMassri19a}.} Let $H$ be a Hilbert space and let $\mathcal{G}=\mathcal{G}(H)$ be the set of observables. We give to $\mathcal{G}$ a graph structure by assigning an edge between observables $P$ and $Q$ if and only if $[P,Q]=0$. We call this graph, \emph{the graph of powers}. Among all global intensive valuations we are interested in the particular class of ISA.
\begin{definition}
Let $H$ be a Hilbert space. An \emph{Intensive State of Affairs} is a global intensive valuation
$\Psi:\mathcal{G}(H)\to[0,1]$ from the graph of powers $\mathcal{G}(H)$
such that $\Psi(I)=1$ and 
\[
\Psi(\sum_{i=1}^{\infty} P_i)=
\sum_{i=1}^\infty \Psi(P_i)\]
for any piecewise orthogonal projections $\{P_i\}_{i=1}^{\infty}$.
The numbers $\Psi(P) \in [0,1]$, are called {\it intensities} or {\it potentia}
and the nodes $P$ are called \emph{ powers}.
Hence, an ISA assigns a potentia to each power.
\end{definition}
Intuitively, we can picture an ISA
as a table,
\[
\Psi:\mathcal{G}(H)\rightarrow[0,1],\quad
\Psi:
\left\{
\begin{array}{rcl}
P_1 &\rightarrow &p_1\\
P_2 &\rightarrow &p_2\\
P_3 &\rightarrow &p_3\\
  &\vdots&
\end{array}
\right.
\]

\begin{theorem}
Let $H$ be a separable Hilbert space, $\dim(H)>2$ and let $\mathcal{G}$ be the graph of powers with the commuting relation given by QM.
\begin{itemize}
\item Any positive semi-definite self-adjoint operator 
of the trace class $\rho$ determines in a bijective way
an ISA $\Psi:\mathcal{G}\to [0,1]$. 
\item Any GIV determines univocally a set of powers that are considered as truly existent. 
\end{itemize}
\end{theorem}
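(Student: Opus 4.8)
The plan is to treat the two bullets separately, the first carrying essentially all of the mathematical weight. For the bijection claim I would exhibit the correspondence explicitly through the map $\rho \mapsto \Psi_\rho$ given by $\Psi_\rho(P) = \mathrm{Tr}(\rho P)$, and then verify the three ingredients a bijection demands: that the map lands among the ISAs, that it is injective, and that it is surjective. To see that $\Psi_\rho$ is genuinely an ISA, I would note that positivity of $\rho$ together with $0 \leq P \leq I$ forces $0 \leq \mathrm{Tr}(\rho P) \leq \mathrm{Tr}(\rho)$; since $\Psi_\rho(I) = \mathrm{Tr}(\rho)$ is required to equal $1$, the normalization $\mathrm{Tr}(\rho) = 1$ is not an extra hypothesis but is imposed by the condition $\Psi(I)=1$, and it simultaneously pins the range inside $[0,1]$. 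Countable additivity over piecewise orthogonal families then follows from linearity of the trace and normality (continuity) of the trace functional against trace-class operators, using that $\sum_i P_i$ converges in the strong operator topology.

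For injectivity I would use that $\Psi_{\rho_1} = \Psi_{\rho_2}$ means $\mathrm{Tr}((\rho_1-\rho_2)P)=0$ for every projection $P$; specializing to rank-one projections $P=|v\rangle\langle v|$ gives $\langle v|(\rho_1-\rho_2)|v\rangle = 0$ for all $v$, and polarization over the complex field then yields $\rho_1=\rho_2$. Surjectivity is where the real content lies: I must show that \emph{every} ISA arises as $\mathrm{Tr}(\rho\,\cdot)$ for some positive trace-class $\rho$. This is precisely the assertion of Gleason's theorem for separable Hilbert spaces, and it is exactly here that the hypothesis $\dim(H)>2$ enters essentially, since in dimension two there are frame functions not of trace form and the bijection genuinely fails. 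I expect this surjectivity step to be the main obstacle: either I invoke Gleason's theorem as a black box, or I reproduce its nontrivial core, namely the continuity of frame functions on the unit sphere, their extension to a quadratic form given by a bounded positive operator, and the trace-class refinement extracted from the $\sigma$-additivity requirement.

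For the second bullet I would define the set of truly existent powers attached to a GIV $\Psi$ as the preimage $E(\Psi) = \{P \in \mathcal{G} : \Psi(P) = 1\}$, that is, exactly those powers whose potentia attains its maximal value, read as full actuality. Since $\Psi$ is a function, this set is fixed without ambiguity by $\Psi$ alone, which is all that univocal determination requires. I would additionally record the structural refinement that, when $\Psi = \Psi_\rho$ comes from a density operator, the identity $\mathrm{Tr}(\rho(I-P))=\|\rho^{1/2}(I-P)\|^2$ shows $\Psi_\rho(P)=1$ to be equivalent to $P \geq \mathrm{supp}(\rho)$; hence $E(\Psi_\rho)$ is the principal filter generated by the support projection of $\rho$, possessing a canonical least element. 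This makes transparent that no choice of basis or reference frame enters the definition, which is precisely the objectivity point the theorem is meant to secure.
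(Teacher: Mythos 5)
Your treatment of the first bullet is essentially the paper's own proof, carried out with more care. The paper assigns $\Psi(P)=\mathrm{Tr}(\rho P)$ via Born's rule and then cites Gleason's theorem, whose existence-and-uniqueness clause covers both your surjectivity step and your injectivity step (your polarization argument is a nice self-contained substitute for the uniqueness part, and your remark that $\dim(H)>2$ is exactly where Gleason can fail is on target). Your observation that $\Psi(I)=1$ forces $\mathrm{Tr}(\rho)=1$ also patches a genuine looseness in the statement: as written, an arbitrary positive trace-class $\rho$ does not yield an ISA, so the bijection is really with normalized (density) operators, which the paper leaves implicit.

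The second bullet is where you genuinely diverge, and in a direction the surrounding text explicitly rejects. The paper's proof introduces $\tau:[0,1]\to\{0,1\}$ with $\tau(t)=0$ if and only if $t=0$, and takes the truly existent powers of a GIV $\Psi$ to be those in $(\tau\circ\Psi)^{-1}(1)=\{P\in\mathcal{G}:\Psi(P)>0\}$: \emph{any} non-zero potentia marks existence. You instead take $E(\Psi)=\{P\in\mathcal{G}:\Psi(P)=1\}$, i.e.\ existence as maximal intensity. Both assignments are univocal, so your proof satisfies the letter of the (loosely worded) statement, but your criterion reinstates precisely the binary-certainty reading the paper argues against --- the text insists that intensity $1$ ``is just a particular case of the possible values of intensities'' and that the intensive ontology counts every power of non-zero potentia as real. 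Concretely, for $\Psi=\Psi_\rho$ your set is the principal filter $\{P: P\geq \mathrm{supp}(\rho)\}$ (your computation there is correct), whereas the paper's set is the much larger $\{P: P\,\mathrm{supp}(\rho)\neq 0\}$, the projections not orthogonal to the support of $\rho$. So the mathematics you wrote is sound, but it characterizes a different set than the one the theorem is intended to single out.
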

\begin{proof} \begin{enumerate}
\item Using Born's rule, we can assign to each
observable $P\in\mathcal{G}$ the value $\mbox{Tr}(\rho P)\in[0,1]$.
Hence, we get an ISA $\Psi:\mathcal{G}\to[0,1]$.
Let us prove that this assignment is bijective. Let 
$\Psi:\mathcal{G}\to[0,1]$ be an ISA. By Gleason's theorem
there exists a unique positive semi-definite self-adjoint operator 
of the trace class $\rho$ such that $\Psi$ is given by the Born rule with
respect to $\rho$.\footnote{As remarked in \cite{WK}: ``Prior to the Bell and Kochen-Specker theorems, Gleason's theorem demonstrated that, for any quantum system of dimension at least three, the unique way to assign probabilities to the outcomes of projective measurements is via the Born rule. In particular, Gleason's theorem excludes any deterministic probability rule given by a \{0, 1\}-valued assignment of probabilities to all the self-adjoint projections on the system's Hilbert space.''}
\item Consider the function $\tau:[0,1]\to\{0,1\}$, 
where $\tau(t)=0$ if and only if $t=0$. Now, given a 
GIV $\Psi:\mathcal{G}\to[0,1]$, the map $\tau \Psi:\mathcal{G}\to\{0,1\}$
is a well-defined map. 
\end{enumerate}
\end{proof}

\begin{definition}
Let $\mathcal{G}$ be a graph. We define a \emph{context} as a complete subgraph (or aggregate) inside $\mathcal{G}$. For example, let $P_1,P_2$ be two elements of $\mathcal{G}$. Then, 
$\{P_1, P_2\}$ is a contexts if $P_1$ is related to $P_2$, $P_1\sim P_2$. Saying it differently, if there exists an edge between $P_1$ and $P_2$. In general, a collection of elements $\{P_i\}_{i\in I}\subseteq \mathcal{G}$ determine a context if $P_i\sim P_j$ for all $i,j\in I$. Equivalently, if the subgraph with nodes $\{P_i\}_{i\in I}$ is complete.  A \emph{maximal} context is a context not contained properly in another context.  If we do not indicate the opposite, when we refer to contexts we will be implying maximal contexts.
\end{definition}

For the graph of powers, the notion of context coincides with the usual one; a complete set of commuting operators. However, all projection operators can be assigned a consistent value bypassing in this way the famous Kochen-Specker theorem,
\begin{theorem} 
{\sc (Intensive Non-Contextuality Theorem)} Given any Hilbert space $H$, then an ISA is possible over $H$.
\end{theorem}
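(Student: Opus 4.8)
The plan is to reduce the existence claim to the Born-rule construction already supplied by the preceding theorem of this section, which to every positive semi-definite self-adjoint trace-class operator $\rho$ (normalized so that $\mathrm{Tr}(\rho)=1$) associates an ISA via $\Psi(P)=\mathrm{Tr}(\rho P)$. Since the statement asserts only that an ISA is \emph{possible}, it is enough to exhibit one admissible $\rho$ and apply that construction.

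First I would note that every nonzero Hilbert space $H$ contains a unit vector $|\psi\rangle$; setting $\rho=|\psi\rangle\langle\psi|$ produces a rank-one orthogonal projector, which is self-adjoint, positive semi-definite, of trace class, and satisfies $\mathrm{Tr}(\rho)=1$. Invoking the preceding theorem, the assignment $\Psi(P)=\mathrm{Tr}(\rho P)=\langle\psi|P|\psi\rangle$ is then an ISA. For completeness I would also verify the three defining conditions directly: for any power $P$ one has $\langle\psi|P|\psi\rangle=\|P\psi\|^{2}\in[0,1]$ because $\|P\psi\|\le\|\psi\|=1$; the normalization gives $\Psi(I)=\langle\psi|\psi\rangle=1$; and for any piecewise orthogonal family $\{P_i\}_{i=1}^{\infty}$ the partial sums $\sum_{i\le n}P_i$ form an increasing sequence of projections converging strongly to $\sum_{i=1}^{\infty}P_i$, so that continuity of the inner product yields $\Psi(\sum_i P_i)=\langle\psi|(\sum_i P_i)|\psi\rangle=\sum_i\langle\psi|P_i|\psi\rangle=\sum_i\Psi(P_i)$.

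The conceptual heart of the statement, which I would stress last, is that this argument places no constraint on $\dim(H)$: the hypothesis $\dim(H)>2$ in the earlier theorem is required only for \emph{bijectivity} (through Gleason's theorem), not for the mere production of an ISA. Hence an ISA exists over every Hilbert space, and because $\Psi(P)$ is fixed intrinsically by each projection $P$ with no reference to the maximal context into which $P$ is embedded, the valuation is automatically consistent across all contexts. This is exactly what circumvents Kochen-Specker: that theorem forbids a global non-contextual $\{0,1\}$-valued assignment, whereas the $[0,1]$-valued intensive assignment built here is unobstructed. The only step demanding genuine care is the additivity check, where the passage from finite to countably infinite orthogonal families rests on the strong-convergence and continuity argument above rather than on any purely algebraic identity.
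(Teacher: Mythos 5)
Your proposal is correct, and it is worth noting that the paper itself supplies no argument here at all: its ``proof'' of the Intensive Non-Contextuality Theorem is the single line ``See \cite{deRondeMassri21a}'', deferring entirely to an external reference. Your route is a self-contained construction: pick any unit vector $|\psi\rangle$, set $\rho=|\psi\rangle\langle\psi|$, define $\Psi(P)=\mathrm{Tr}(\rho P)=\langle\psi|P|\psi\rangle$, and verify the ISA axioms directly. This is precisely the existence half of the earlier theorem in the same section (the Born-rule assignment), specialized to a pure state, and your two supplementary observations are exactly the points the paper leaves implicit: first, the hypothesis $\dim(H)>2$ in that earlier theorem is needed only for the bijectivity direction, which invokes Gleason's theorem, whereas mere existence of an ISA is dimension-free --- this is what licenses the present statement's ``any Hilbert space''; second, the only step with genuine analytic content is countable additivity, and your treatment (partial sums of pairwise orthogonal projections form an increasing sequence of projections converging strongly to the total projection, and continuity of the quadratic form $\langle\psi|\cdot|\psi\rangle$ passes the limit through the sum) is sound, since weak convergence already suffices for that last step. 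The non-contextuality claim then comes for free, as you say: the value $\Psi(P)$ depends only on $P$, never on the maximal commuting context containing it, so the Kochen--Specker obstruction --- which concerns $\{0,1\}$-valued global assignments --- simply does not apply to this $[0,1]$-valued one. One pedantic caveat: ``any Hilbert space'' should be read as excluding the zero space, where there is no unit vector and no normalized trace; your phrase ``nonzero Hilbert space'' quietly and correctly handles this.
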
 
\begin{proof}
See \cite{deRondeMassri21a}.
\end{proof}

\noindent This theorem restores the possibility of an invariant physical representation of any quantum wave function $\Psi$. Thus, contrary to the orthodox interpretation of QM in terms of systems with properties (which impose a binary valuation), our conceptual representation of quantum physical reality is not relative to any particular context, it is global and essentially intensive. We refer the reader to \cite{deRondeMassri19a, deRondeMassri21a} for a detailed discussion and analysis. Taking this approach as a standpoint, it was possible to derive in \cite{deRondeMassri23}, two important theorems: the Basis Invariance Theorem and the Factorization Invariance Theorem to which we now turn our attention.

\section{An Invariant Account of Bases and Factorizations}

Let us begin by recalling the Basis Invariance Theorem.  
\begin{theorem}{\sc (Basis Invariance Theorem)}
Let $H$ be a Hilbert space, $\mathcal{G}$ its graph of powers and let 
$\Psi:\mathcal{G}\to[0,1]$ be an ISA. Let $\mathcal{C}_1,\mathcal{C}_2\subseteq\mathcal{G}$ be two contexts. The intensities of $\Psi$ over $\mathcal{C}_1$ and $\mathcal{C}_2$ may be different, but both determine in a unique way the whole ISA. In other words, the GIV defined by $\Psi$ does not depend on the basis for $H$. 
\end{theorem}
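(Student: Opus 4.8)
The plan is to reduce the statement to the bijection between ISAs and density operators established in the previous theorem, and then to observe that a density operator is an intrinsically basis-free object. Recall that by Born's rule together with Gleason's theorem (valid since $\dim(H)>2$), an ISA $\Psi:\mathcal{G}\to[0,1]$ is the same datum as a unique positive semi-definite trace-class operator $\rho$ with $\mathrm{Tr}(\rho)=1$, related by $\Psi(P)=\mathrm{Tr}(\rho P)$ for every power $P\in\mathcal{G}$. The crucial point is that $\rho$ is an operator on $H$ itself, defined with no reference to any basis, whereas a context $\mathcal{C}$ amounts to a choice of complete commuting family, i.e.\ of an orthonormal eigenbasis $\{|e_i\rangle\}$ of $H$.

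First I would make the notion of ``intensities over a context'' explicit. For a maximal context $\mathcal{C}$ with basis $\{|e_i\rangle\}$, its elements are the projections diagonal in that basis, and the intensities $\Psi$ assigns to them are the partial sums of the numbers $\langle e_i|\rho|e_i\rangle$---in effect, the diagonal of $\rho$ read in the basis $\{|e_i\rangle\}$. Passing from $\mathcal{C}_1$ (basis $\{|e_i\rangle\}$) to $\mathcal{C}_2$ (basis $\{|f_j\rangle\}$) is a unitary change of basis $|f_j\rangle=U|e_j\rangle$, and since $\langle f_j|\rho|f_j\rangle=\langle e_j|U^{\dagger}\rho U|e_j\rangle$ need not equal $\langle e_j|\rho|e_j\rangle$, the two lists of intensities will in general differ. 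This accounts for the clause ``may be different.''

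Next I would draw the invariance conclusion. Both lists are computed from one and the same $\rho$ by the trace functional, and the trace is independent of the orthonormal basis used to evaluate it; hence there is a single, context-independent operator $\rho$ underlying every context's intensities. Since the assignment $\Psi\leftrightarrow\rho$ is bijective, the global intensive valuation $\Psi=\mathrm{Tr}(\rho\,\cdot)$ is itself a well-defined global object on all of $\mathcal{G}$, and its restriction to any context $\mathcal{C}$ is uniquely fixed by $\rho$. Thus each context is merely a \emph{shadow} of the one invariant ISA, and the family of all such restrictions is mutually consistent precisely because it descends from $\rho$. This is exactly the assertion that the GIV does not depend on the basis for $H$.

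The step I expect to carry the real weight is the bijection itself, i.e.\ the use of Gleason's theorem to guarantee that the global valuation $\Psi$ comes from a \emph{unique} $\rho$; once this is granted, basis-invariance is essentially immediate. I would also flag a subtlety worth stating explicitly: a single context does not on its own reconstruct $\rho$, since off-diagonal entries are invisible to the projections diagonal in one basis (the diagonal alone never sees the coherences). The phrase ``both determine in a unique way the whole ISA'' should therefore be read in the global sense---the whole ISA is the unique invariant object that simultaneously determines, and is consistent with, the intensities seen in every context---rather than as the (false) claim that one context's numbers alone recover $\rho$. The hypotheses $\dim(H)>2$ (for Gleason) and separability (so that the countable additivity in the definition of an ISA matches the $\sigma$-additivity of the measure produced by Gleason) are what make the reduction legitimate.
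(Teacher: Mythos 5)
Your proposal is correct and follows essentially the same route as the paper's own proof: both arguments reduce the theorem to the correspondence $\Psi\leftrightarrow\rho$ (Born rule plus Gleason, from the preceding theorem) and then appeal to the basis-independence of the trace functional. The only real difference is presentational: the paper runs the argument in ``passive'' form, taking the density matrix $\rho$ of $\Psi$ written in the basis $\mathcal{C}_1$ and the matrix $\rho'=U\rho U^{\dagger}$ written in the basis $\mathcal{C}_2$, and checking by cyclicity of the trace that $Tr(\rho P)=Tr(\rho' P')$ for the correspondingly transformed power $P'$; you instead work with the single abstract operator $\rho$ and observe it is defined with no reference to any basis. One point in your write-up deserves emphasis: the caveat you flag---that the intensities over a single context are only the diagonal of $\rho$ in one basis, which cannot see the coherences and hence does not by itself recover $\rho$---is correct, and it identifies a looseness in the theorem's literal wording (``both determine in a unique way the whole ISA''). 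The paper's proof implicitly adopts exactly your charitable reading: it begins from ``the density matrix associated to $\Psi$ in the basis $\mathcal{C}_1$,'' i.e., the full matrix of the already-given global $\Psi$ expressed in that basis (off-diagonal entries included), not from the context's intensities alone. So your version proves the same invariance statement the paper proves, while making explicit a distinction the paper glosses over.
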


\begin{proof} Let $\rho$ be the density matrix associated to $\Psi$ in the basis $\mathcal{C}_1$. Then, the intensity of a particular power $P$ can be computed with the Born rule $Tr(\rho P)$. If $\rho'$ is the density matrix associated to $\Psi$ in the basis $\mathcal{C}_2$, then there exists a unitary matrix $U$ such that $\rho'=U\rho U^\dag$. Hence, $P$ is transformed to $P'=U^\dag P U$ and
the intensities of $P$ and $P'$ are the same,
\[
Tr(\rho P) = Tr(U \rho' U^\dag P) = Tr(\rho' U^\dag P U) = Tr(\rho' P').
\]
\end{proof}

\noindent As a consequence of the {\it Basis Invariance Theorem}, the choice of a basis or context becomes in this scheme just the choice of a viewpoint of analysis which is compatible with the choice of any other viewpoint. Thus, escaping contextuality and relativism, all reference frames provide a consistent account ---of the different values of projection operators which are part--- of {\it the same} (intensively) defined state of affairs. 

\smallskip 

Turning now our attention to the relativism of factorizations exposed in \cite{DelaTorre10} it is possible to reach an equivalent result \cite{deRondeMassri23}. Let us denote by ${\cal ISA}(H)$ the set of all possible ISAs on  the Hilbert space $H$. We can identify this set (after fixing a basis) with the space of density matrices over $H$.
In fact, given any completely positive trace-preserving 
map $T:B(H_1)\to B(H_2)$, we obtain a map $T_{*}:{\cal ISA}(H_1)\to {\cal ISA}(H_2)$.
An example of a completely positive trace-preserving map is the partial trace.
Notice that for each factorization $H=H_1\otimes H_2$, we have 
a different partial trace, $T:B(H)\to B(H_1)$.
\begin{definition}
Let $T:B(H)\to B(H')$ be a
completely positive trace-preserving map between (trace-class) operators on Hilbert spaces $H,H'$ and
let $\Psi:\mathcal{G}(H)\to[0,1]$ be an ISA over $H$.
We say that $\Psi':\mathcal{G}(H')\to[0,1]$ is a \emph{shadow}\footnote{It it argued orthodoxly that if $\rho$ is a density matrix in $A \otimes B$, then the matrix associated to the ``sub-system'' in $A$ is the first partial trace of $\rho$. However, the resulting density matrix $Tr_1(\rho)$ is not a submatrix of $\rho$ but a projection, an image. In order to make explicit our departure from the orthodox interpretation in terms of systems and sub-systems which has been already shown to be inconsistent (see \cite{Clifton95, Clifton96}), we choose the word ``shadow'' ---instead of ``subsystem''--- which seems a much more appropriate notion to capture this mathematical aspect of the formalism.}
of $\Psi$ if $\Psi'=T_*(\Psi)$.
Specifically, if $\Psi$ is given by a density matrix $\rho$. 
Then, $\Psi'$ is given by $\rho'=T(\rho)$.
\end{definition}

\begin{theorem} {\sc (Factorization Invariance Theorem)}
Let $H$ be a Hilbert space with factorizations $H=H_1\otimes H_2=
H_1'\otimes H_2'$. Let $T:B(H)\to B(H_1)$ and $T':B(H)\to B(H_1')$
be the partial traces of these factorizations.
Let $\Psi$ be an ISA over $H$ and let $\Psi_1=T_*(\Psi)$ and $\Psi_1'=T_*'(\Psi)$ the corresponding ISAs over $H_1$
and $H_1'$. 
Assume there exists a completely positive trace-preserving map 
$U:B(H_1)\to B(H_1')$
such that $UT=T'$. Then, $\Psi_1'=U_* (\Psi_1)$.
\end{theorem}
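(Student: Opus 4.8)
The plan is to push everything through the bijection between ISAs and trace-class density operators supplied by the first theorem of Section~3, so that the assertion about pushforwards of ISAs becomes a one-line assertion about operators. The point is that, by the very definition of \emph{shadow}, the pushforward $T_*$ acts on ISAs exactly as the map $T$ acts on the underlying density matrices: if $\Psi$ corresponds to $\rho$, then $T_*(\Psi)$ is the ISA given by the Born rule with respect to $T(\rho)$, and likewise for $T'_*$ and $U_*$. Thus the assignment ``ISA $\mapsto$ density matrix'' intertwines the three pushforwards $T_*,T'_*,U_*$ with the completely positive trace-preserving maps $T,T',U$, and the theorem is precisely the assertion that a certain diagram of these operator maps commutes.

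Concretely I would argue as follows. Let $\rho$ be the density matrix on $H$ associated to $\Psi$ (this uses Gleason's theorem on $H$, which is licit under the standing hypothesis $\dim(H)>2$). By the definition of shadow, $\Psi_1=T_*(\Psi)$ is given by $\rho_1=T(\rho)$ and $\Psi_1'=T'_*(\Psi)$ is given by $\rho_1'=T'(\rho)$; applying $U_*$ to $\Psi_1$ then produces the ISA on $H_1'$ given by $U(\rho_1)=U(T(\rho))$. The hypothesis $UT=T'$ now collapses the two candidates to the same operator,
\[
U(\rho_1)=U(T(\rho))=(UT)(\rho)=T'(\rho)=\rho_1',
\]
so that $U_*(\Psi_1)$ and $\Psi_1'$ are both the Born-rule ISA of one and the same density matrix $\rho_1'$ on $H_1'$. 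Evaluating on an arbitrary power $P\in\mathcal{G}(H_1')$ gives $U_*(\Psi_1)(P)=\mathrm{Tr}(\rho_1'P)=\Psi_1'(P)$, whence $U_*(\Psi_1)=\Psi_1'$.

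I do not expect a genuine obstacle here: once the dictionary is in place the result is pure functoriality, and the equality of ISAs follows from the \emph{forward} (Born-rule) direction of the correspondence alone, so that no appeal to injectivity is even required. The only points deserving care are bookkeeping ones, namely that $U_*$ is well defined as a map ${\cal ISA}(H_1)\to{\cal ISA}(H_1')$ --- which is exactly what ``completely positive and trace-preserving'' buys us, since such $U$ carries density matrices to density matrices --- and that the \emph{same} fixed identification of ISAs with density matrices is used at each of the three spaces, so that the intertwining holds on the nose rather than only up to a unitary conjugation of the kind appearing in the Basis Invariance Theorem. Both are guaranteed by the framework of Section~3, so the displayed equation really is the whole content of the proof.
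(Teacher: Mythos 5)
Your proposal is correct and is essentially the paper's own argument: the paper proves the theorem in one line by functoriality, $\Psi_1'=T'_*(\Psi)=(UT)_*(\Psi)=U_*T_*(\Psi)=U_*(\Psi_1)$, and your density-matrix computation $U(T(\rho))=(UT)(\rho)=T'(\rho)$ is exactly this identity unpacked through the dictionary by which the pushforwards are defined in the first place. Your added remarks (that only the forward Born-rule direction of the correspondence is needed, and that $U_*$ is well defined because CPTP maps preserve density matrices) are accurate but do not change the substance.
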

\begin{proof}
Straightforward, $\Psi_1'=T'_*(\Psi)=(UT)_*(\Psi)=U_* T_* (\Psi)=U_*(\Psi_1)$. (For an example see \cite{deRondeMassri23})
\end{proof}

The previous theorem implies that all factorizations are consistent and that all corresponding {\it shadows} of the ISA are all compatible between each other. In fact, if we have a compatibility $U$ between the factors $H_1$ and $H_1'$, then this compatibility translates itself in a compatibility between $\Psi_1$ and $\Psi_1'$. 
\begin{corollary}
All factorizations are compatible with respect to the same ISA.
\end{corollary}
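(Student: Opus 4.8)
The plan is to derive the corollary directly from the Factorization Invariance Theorem by promoting its pairwise conclusion into a uniform statement about the whole family of factorizations. First I would fix an ISA $\Psi$ over $H$ and let $\{H = H_1^{(\alpha)} \otimes H_2^{(\alpha)}\}_{\alpha}$ range over all factorizations of $H$. For each $\alpha$ the associated partial trace $T^{(\alpha)}:B(H)\to B(H_1^{(\alpha)})$ is a completely positive trace-preserving map, and so by the construction preceding the theorem it induces a shadow $\Psi_1^{(\alpha)} = T^{(\alpha)}_*(\Psi)$ over the corresponding factor. The content of ``all factorizations are compatible'' is precisely that, for any pair $\alpha,\beta$, the two shadows $\Psi_1^{(\alpha)}$ and $\Psi_1^{(\beta)}$ are related by a completely positive trace-preserving intertwiner $U_*$, so that no shadow contains information inconsistent with any other; they are all images of the single invariant object $\Psi$.

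The key step is therefore to verify the hypothesis of the Factorization Invariance Theorem for an arbitrary pair of factorizations, namely the existence of a completely positive trace-preserving map $U:B(H_1^{(\alpha)})\to B(H_1^{(\beta)})$ with $U\,T^{(\alpha)} = T^{(\beta)}$. Once this compatibility map $U$ is produced, the theorem applies verbatim and yields $\Psi_1^{(\beta)} = U_*(\Psi_1^{(\alpha)})$, which is exactly the asserted compatibility. Since $\alpha$ and $\beta$ were arbitrary, this establishes pairwise compatibility across the entire family, and because every shadow is by definition a pushforward of the one fixed $\Psi$, the family is mutually compatible as a whole.

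The main obstacle I expect is justifying the existence of the intertwiner $U$ in full generality. The Factorization Invariance Theorem is stated conditionally, assuming such a $U$ exists, so the corollary is only as strong as our ability to exhibit it for every pair of factorizations. Concretely, a change of factorization $H_1\otimes H_2 = H_1'\otimes H_2'$ is implemented by a unitary $W$ on $H$ reshuffling the tensor structure, and the natural candidate for $U$ is the completely positive trace-preserving map obtained by conjugating with $W$ and then tracing out the second factor of the target decomposition. I would check that this $U$ is indeed completely positive and trace-preserving (both being standard consequences of its being built from conjugation by a unitary followed by a partial trace) and, crucially, that it satisfies the intertwining relation $U\,T^{(\alpha)} = T^{(\beta)}$, which is the compatibility condition; here the delicate point is that a generic partial trace is not inverted by a completely positive map, so $U$ must be read off from the global unitary relating the two factorizations rather than constructed factor-by-factor. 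Granting the construction of $U$ in the relevant cases (as illustrated by the worked example in \cite{deRondeMassri23}), the corollary follows immediately from the theorem and the invariance of the underlying ISA.
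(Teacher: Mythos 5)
Your overall route is the paper's: the corollary is obtained by applying the Factorization Invariance Theorem to pairs of factorizations, and indeed the paper's entire proof is ``Follows from the previous Theorem.'' But you have correctly noticed that the theorem is \emph{conditional} --- it assumes a completely positive trace-preserving map $U:B(H_1)\to B(H_1')$ with $UT=T'$ --- and your attempt to discharge that hypothesis for \emph{every} pair of factorizations is exactly where the proposal breaks down, because no such $U$ exists in general. Concretely, take $H=\mathbb{C}^2\otimes\mathbb{C}^2$, let $T=\mathrm{Tr}_2$ be the partial trace of the given factorization, and let $T'=\mathrm{Tr}_1$ be the partial trace of the swapped factorization $H_1'\otimes H_2'$ with $H_1'=H_2$, $H_2'=H_1$. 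The states $\rho_a=|0\rangle\langle 0|\otimes|0\rangle\langle 0|$ and $\rho_b=|0\rangle\langle 0|\otimes|1\rangle\langle 1|$ satisfy $T(\rho_a)=T(\rho_b)=|0\rangle\langle 0|$ but $T'(\rho_a)=|0\rangle\langle 0|\neq |1\rangle\langle 1|=T'(\rho_b)$, so no map $U$ whatsoever (completely positive or not) can satisfy $U\,T=T'$: one marginal simply does not determine the other, and the same failure occurs generically for factorizations related by an entangling unitary. Your candidate construction cannot repair this; moreover, as stated it is not even well-typed, since conjugation by the global unitary $W$ acts on $B(H)$ rather than on $B(H_1)$, so one must first embed $\sigma\mapsto\sigma\otimes\tau$ for some reference state $\tau$, and the counterexample shows that no choice of $\tau$ yields the required intertwining relation.

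The conclusion to draw is that the corollary cannot mean --- and in the paper does not mean --- ``for every pair of factorizations there exists a completely positive trace-preserving intertwiner relating the shadows.'' The paper's reading, spelled out in the paragraph between the theorem and the corollary, is the weaker relational one that your first two paragraphs already capture: every shadow $\Psi_1=T_*(\Psi)$, for every factorization, is a pushforward of the \emph{same} invariant $\Psi$, so no choice of factorization alters the underlying ISA; and \emph{whenever} a compatibility $U$ between two factors happens to exist, it automatically carries $\Psi_1$ to $\Psi_1'$ via $U_*T_*(\Psi)=(UT)_*(\Psi)=T'_*(\Psi)$. Under that conditional reading your first two paragraphs constitute the whole proof and coincide with the paper's; your third paragraph, which tries to upgrade the statement to an unconditional one, is attempting to prove something that is false.
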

\begin{proof}
Follows from the previous Theorem.
\end{proof}

\noindent Thus, the choice of the factorization has no incidence in the intensities or potentia already contained in the considered ISA. All choices of different factorizations remain consistent with the same global account of {\it the same} ISA. Taking these mathematical facts as a standpoint, we are now in conditions to provide a natural conceptual account of both `bases' and `factorizations' operationally linked to the intensive relations between the multiple degrees of freedom that can be considered within a laboratory when analyzing quantum phenomena.

\section{A Conceptual-Operational Account of a Quantum Lab}

Marking a radical departure with respect to the orthodox Bohrian-positivist contemporary account of observations not only as unproblematic {\it givens} of experience but also as the kernel standpoint from which scientific theories can be developed, Einstein writes in the following terms: 
\begin{quotation}
\noindent {\small ``From Hume Kant had learned that there are concepts (as, for example, that of causal connection), which play a dominating role in our thinking, and which, nevertheless, can not be deduced by means of a logical process from the empirically given (a fact which several empiricists recognize, it is true, but seem always again to forget). What justifies the use of such concepts? Suppose he had replied in this sense: Thinking is necessary in order to understand the empirically given, {\it and concepts and `categories' are necessary as indispensable elements of thinking.}'' \cite[p. 678]{Einstein65} (emphasis in the original)}
\end{quotation} 
As it should be recognized, in a truly realist framework, the {\it moments of unity} are not presupposed as exterior (pre-theoretical) givens but ---instead--- are always theoretically constructed in formal-conceptual terms, and that is the reason why ``it is only the theory which decides what can be observed.'' It is only with adequate concepts that we can establish what has been observed. A physical theory requires an invariant formalism, as well as an objective conceptual representation that is consistent with it. Furthermore, both conceptual and formal representations must be operational. This is, they must contain the conditions under which we can understand observations in accordance to what is presented in formal-conceptual terms. We have to be able to consistently connect the theory ---namely, the mathematical-conceptual system--- with the phenomena in question. For as stated clearly by Einstein \cite[p. 26]{Einstein20}, ``The concept does not exist for the physicist until he has the possibility of discovering whether or not it is fulfilled in an actual case.'' If, as what just said, the understanding of observation requires concepts, at the same time, the physical concepts require a link to experience; i.e., the specific operational conditions that allow to observe what the concepts are referring to. So let's advance in this direction. In order to bridge the gap between, on the one hand, the mathematical formalism and the conceptual objectivity as we presented it ---without the projection postulate and replacing the reference to binary outcomes by intensive patterns and values--- and, on the other, a consistent and coherent understanding of experience, we must produce a set of meaningful operational notions. Let us remark that, contrary to the Bohrian doctrine of classical concepts, these notions should not be considered as relying on a presupposed classical representation. Even though we might use the same words, in the context of QM these physical notions need to be understood in internal terms with respect to the theory ---there is no necessary link to the classical representation of physics. As remarked by Heisenberg in an interview by Thomas Kuhn: 
\begin{quotation}
\noindent {\small ``The decisive step is always a rather discontinuous step. You can never hope to go by small steps nearer and nearer to the real theory; at one point you are bound to jump, you must really leave the old concepts and try something new... in any case you can't keep the old concepts.''   \cite[p. 98]{Bokulich06}}
\end{quotation} 
Of course, this does not mean you cannot keep the same words. For example, the words `space' and `time' refer to specific concepts in classical mechanics which differ significantly from the concepts, related to exactly the same words, discussed in relativity theory. This is a consequence of the fact that concepts acquire their meaning in each case from the place they occupy in the specific conceptual system of the particular theory they are part of, this is, their meaning is determined through the relation with the other concepts inside the system. As Heisenberg explained: 
\begin{quotation}
\noindent {\small ``New phenomena that had been observed could only be understood by new concepts which were adapted to the new phenomena. [...] These new concepts again could be connected in a closed system. [...] This problem arose at once when the theory of special relativity had been discovered. The concepts of space and time belonged to both Newtonian mechanics and to the theory of relativity. But space and time in Newtonian mechanics were independent; in the theory of relativity they were connected.'' \cite[pp. 97-98]{Heis58}}
\end{quotation} 

We are now ready to introduce the notions of {\it screen}, {\it detector}, {\it experimental arrangement} and {\it quantum laboratory} that will help us to build a conceptual-operational bridge between the mathematical formalism and quantum phenomena. Let us start with the simplest case. A {\bf screen} with $n$ {\bf detectors} corresponds to the vector space $\mathbb{C}^n$. Choosing a basis, say $\{|1\rangle,\dots,|n\rangle\}$, is the same as choosing some specific $n$ detectors. For example, in the next two figures we picture $\mathbb{C}^2$, and $\mathbb{C}^2$ with a chosen basis,
\begin{center}
\includegraphics[scale=.45]{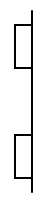}
\captionof{figure}{A screen with a place for two detectors.}
\end{center}
\begin{center}
\includegraphics[scale=.45]{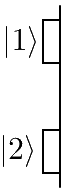}
\captionof{figure}{A screen with two specific detectors.}
\end{center}
A {\bf factorization} $\mathbb{C}^{i_1}\otimes\dots \otimes\mathbb{C}^{i_n}$ is the specific number $n$ of screens, where the screen number $k$ has $i_k$ places for detectors, $k=1,\dots,n$. Choosing a basis in each factor corresponds to choosing the specific detectors; for instance $|\uparrow\rangle, |\downarrow\rangle$. After choosing 
a basis in each factor, we get a basis of the factorization $\mathbb{C}^{i_1}\otimes\dots \otimes\mathbb{C}^{i_n}$
that we denote as
\[
\{ |k_1\dots k_n\rangle \}_{1\le k_j\le i_j}.
\]
The {\bf basis} element $|k_1\dots k_n\rangle$ represents the {\bf power of action} (or simply \textbf{power}) that produces an intensive global effect in the $k_1$ detector of the screen $1$,  in the $k_2$ detector of the screen $2$ and so on until the $k_n$ detector of the screen $n$. In general, any given power will produce a unitary multi-screen non-local effect that has an intensive content.\footnote{It might be stressed that, once we give up Bohr's {\it correspondence principle} or the so called ``quantum to classical limit'', namely, the idea that QM must be related to a spatial continuous representation, we also loose the classical notion of locality which requires a continuous representation precluded by the quantum formalism. This is a natural consequence of Planck's quantum postulate according to which energy must be considered as discrete, and consequently also space and time. In fact, according to the formalism the only distance that we could measure is the distance between basis elements which is always $\sqrt{2}$ (by Pythagoras theorem). This mathematical distance is purely abstract and bears no relation with a distance in Euclidean physical space.} For example, in Figure \ref{det23} we picture two screens, one with two detectors and the other with three detectors. Given the possible combinations, we obtain six different powers $|11\rangle,|12\rangle,|13\rangle,|21\rangle,|22\rangle,|23\rangle$. The powers $|11\rangle$ and $|23\rangle$ are highlighted,
\begin{center}\label{det23}
\includegraphics[scale=.45]{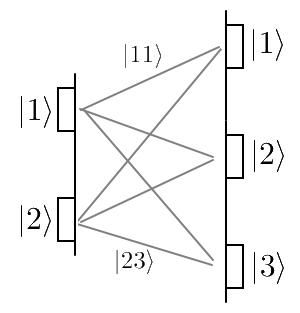}
\captionof{figure}{The 6 {\it powers of action} that arise when considering\\ one screen with 2 detectors and another with 3.}
\end{center}
\noindent According to this representation, a single power of action will produce non-local effects in multiple screens which will be strictly determined in terms of {\it definite valued intensities}. Let us stress the main point of our reference to an ISA, namely, that QM only talks about intensities, never about single outcomes. In particular, an intensity equal to 1 should not be confused with the certainty of obtaining a single outcome; all predictions in QM provide {\it certain knowledge} when considering intensive values ---which is what the theory talks about. This implies a shift from the {\it binary certainty} that was dogmatically imposed by Dirac ---through the {\it ad hoc} addition of the projection postulate--- to an {\it intensive certainty} theoretically imposed by the invariance present in Heisenberg's original quantum formalism itself, where the intensity equal to 1 is just a particular case of the possible values of intensities within the closed interval $[0,1]$. Let us also remark that the operationality we are proposing here allows us to restore a {\it causal} representation ---completely analogous to classical theories--- which not only describes the interaction and evolution of the powers of action in a given situation but also connects them with the intensities obtained in measurement procedures. This methodological move allows us to abandon the Bohrian reference ---established in SQM--- to ``quantum particles'' (represented by quantum states) as well as the {\it non-causal} ``collapse'' evolution imposed by Dirac to single (binary) measurement outcomes. As was pointed out earlier, when we have experiments where we are faced with a single outcome at a time, we must consider these single outcomes not as the expression of particles, and consequently as the main thing to explain, but as a partial information provided within measurement procedures. A single outcome must be then considered as a minimal operational expression of a power. And thus, to fully measure the intensive physical element the theory talks about in these experiments, it is necessary to repeat the experiment until the number of clicks allows us to establish the intensity described by the theory. Besides, the fact the we observe multiscreen effects is a supplementary proof of the fact we are not talking about particles (this effect is incomprehensible if we suppose particles) but about powers of action.

Given an ISA $\Psi$, a factorization $\mathbb{C}^{i_1}\otimes\dots \otimes\mathbb{C}^{i_n}$ and a basis $B=\{|k_1\dots k_n\rangle\langle k_1\dots k_n|\}$ of cardinality $N=i_1\dots i_n$, we define an {\bf experimental arrangement} denoted $\EA_{\Psi,B}^{N;i_1\dots i_n}$, as a specific choice of screens with detectors together with the potentia of each power, that is,
\[
\EA_{\Psi,B}^{N,i_1\dots i_n}= \sum_{k_1,k_1'=1}^{i_1}\dots \sum_{k_n,k_n'=1}^{i_n} 
\alpha_{k_1,\dots,k_n}^{k_1',\dots,k_n'}|k_1\dots k_n\rangle\langle k_1'\dots k_n'|.
\]
The number  that accompanies the power $|k_1\dots k_n\rangle \langle k_1\dots k_n|$ is its {\bf potentia} (or intensity) and the basis $B$ is the set of powers defined by the specific choice of screens with detectors. The number $N$ which is the cardinal of $B$ is called the {\bf degree of complexity} (or degree) of the experimental arrangement.
Notice that in the case where we have an experimental arrangement with one screen, then $\EA_{\Psi,B}^{N;n}$ is a rank 1 matrix which can be understood as a vector ---like in Dirac's formulation---, with two screens, $\EA_{\Psi,B}^{N;i.j}$ is a density matrix ---which are orthodoxly interpreted as {\it mixtures}\footnote{See for a detailed analysis \cite{deRondeMassri22b}.}, namely, as the convex sum of pure states--- and in general, for many screens, $\EA_{\Psi,B}^{N;i_1\dots i_n}$ is a (multi)-tensor product. In contraposition to the orthodox account, a product state (or factorization) is not the separation of a  system (defined atomistically) into many parts (sub-atomic particles or sub-systems) but simply the way in which a single power is non-locally expressed on different screens.

Finally, we use {\bf quantum laboratory}\footnote{We use the term quantum lab in order to stress the fact that a laboratory is a concept that need not be necessarily linked to the substantialist classical representation.} (or quantum lab or Q-Lab)
as a synonim of ISA, in other words, a Q-Lab is the same as an ISA.
Now, given an $\EA_{\Psi,B}^{N;i_1\dots i_n}$ 
and some of the screens $k_1,\dots,k_s$,
we define the {\bf shadow} of $\Psi$ in the screens $k_1,\dots,k_s$
as the partial trace in the factors $k_1,\dots,k_s$ of $\EA_{\Psi,B}^{N;i_1\dots i_n}$,
\[
\EA_{\Psi,B'}^{K;k_1\dots k_s} := Tr_{k_1,\dots,k_s}(\EA_{\Psi,B}^{N;i_1\dots i_n}),
\quad K = k_1\dots k_s.
\]

To sum up, a Q-Lab is denoted as $\Psi$ and is defined as a graph of powers (all the powers) together with  its intensities (see \cite{deRondeMassri21a} for the precise definition) containing all possible experimental set ups. An equivalent class of experimental arrangements produced within the same lab $\EA_{\Psi}^{N}$ consists of all the specific experimental arrangements produced by choosing different configurations of screens and detectors such that the total number of powers is $N$; i.e.,  $\EA_{\Psi, B}^{N}, \EA_{\Psi, B'}^{N}, \EA_{\Psi, B''}^{N}$. Notice that the factorization is already included in the information contained within each basis. Let us notice that our notation allows to include the information of the particular arrangement, namely, the number of screens and detectors that are placed in each of the screens. Let us also remark that, as explained in \cite{deRondeMassri21a}, choosing an  $\EA_{\Psi, B}^{N}$ is the same as choosing a specific context within the graph of powers:
\begin{center}
\includegraphics[scale=.4]{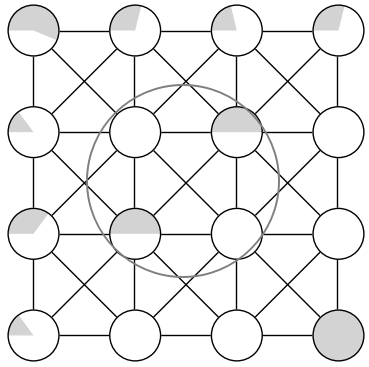}
\captionof{figure}{An $\EA_{\Psi,B}^{4}$ chosen from \\ the net of powers given by the Q-Lab $\Psi$.}
\end{center}

We are now in conditions to address more in detail the essential equivalence relations we have reached for QM through the formal account of bases and factorizations as well as their conceptual representation.

\section{Equivalence Relations in Quantum Mechanics}

While in section 4 we presented the basis and factorization invariance theorems, in section 5 we introduced a series of operational physical notions capable to express the mathematical formalism of the theory. We are now in conditions to think of both theorems, in truly physical terms, as {\it equivalence relations} of the theory, something essential for QM, understood in physical terms, as providing a relation between {\it different} reference frames when discussing about {\it the same} state of affairs. These relations might be considered as analogous to Poincare's {\it relativity principle}\footnote{As stated by Einstein \cite[p. 98]{Einstein20}: ``{\bf Special principle of relativity:} If a system of coordinates {\it K} is chosen so that, in relation to it, physical laws hold good in their simplest form, the same laws hold good in relation to any other system of coordinates {\it K'} moving in uniform translation relatively to {\it K}.''} which guided Einstein in his search to develop the Special Theory of Relativity beyond the ontological picture ---transformed in ``common sense'' within modernity--- of physical reality in terms of entities within absolute space and time.\footnote{It is important to emphasize that this reference to ``relativity'' cannot be equated with the {\it perspectival relativism} that can be found in Dirac's re-formulation of the notion of (quantum) state in which the different basis dependent representations cannot be globally considered.} Einstein stressed the need to hold fast to the equivalence relation between different reference frames ---as stated by Poincare's principle--- as a necessary condition for a physical theory, and to the empirical finding of the invariance of the speed of light, even if this meant doing away with one of the fundamental aspects of the classical representation and replacing the Galilean transformations ---grounded on our ``commonsensical'' classical Newtonian spatiotemporal representation--- by the newly developed Lorentz transformations. In our case we also hold fast to the need of providing an objective-invariant relation between the different basis-dependent representations of QM, which is clearly possible, but it entails rejecting the atomist worldview of the classical representation. Given that physics is the search for unity within change, the invariant-objective relations between {\it different} bases and factorizations which allow to describe {\it the same} becomes essential. Indeed, physics attempts to define {\it moments of unity} not only in order to consider a multiplicity of different phenomena in a consistent manner, but also to account for the equivalence between representations, dependent on empirical and formal perspectives (i.e., reference frames). There must always exist an equivalence relation which consistently connects the (different) possible descriptions provided by (different) reference frames. This is, in fact, the condition which allowed Galileo and Newton to unite the apparently distinct celestial and terrestrial phenomena. This is called in physics {\it the principle of relativity}, namely, the requirement that the equations describing the order of physical relations have the same form in different yet equivalent frames of reference. The relativity principle has played an essential role within physical theories, from classical mechanics and electromagnetism up to relativity theory.\footnote{One decade later, Einstein would then extend it to the {\it general principle of relativity} which states that all systems of reference are equivalent with respect to the formulation of the fundamental laws of physics \cite[p. 220]{Moller52}.} The reason is obvious: without such a principle it becomes simply impossible to talk about {\it the same} experience in different yet equivalent situations, and consequently, the production of a meaningful physical discourse about the possible experiences in different labs becomes precluded. Without such a principle it becomes simply impossible to say we are performing {\it the same} experiment in different labs and physical theories become operationally impracticable. 

Going back to SQM, it becomes then clear the radical subversion of the discipline that took place in the 20th century through Bohr and Dirac's re-definition of the notion of (quantum) {\it state} in (complementary) basis dependent terms (section 2). As we discussed above, the restriction to binary values (`clicks' in detectors) involves the abolition of operational-invariance; i.e., the consistent translation between the experience found in different reference frames of {\it the same} state of affairs. The idea that the state of a system is dependent on a particular basis (i.e., a reference frame) breaks down the very meaning of the notion of {\it state} as referring to something which remains the same independently of reference frames. For if a state is different in different bases it becomes nothing but different to itself, a self-contradiction. This is, of course, the disintegration of the possibility of considering any meaningful {\it moment of unity}, any common reference within the theory. Of course, for many, this has not been understood as ``the end of physics'', but rather, as the production of a new (anti-realist) ``instrumentalist'' account of the discipline where the reference to the world and reality was replaced by the ``more productive'' accurate prediction of measurement outcomes. As described by Arthur Fine: 
\begin{quotation}
\noindent{\small ``[The] instrumentalist moves, away from a realist construal of the emerging quantum theory, were given particular force by Bohr's so-called `philosophy of complementarity'; and this nonrealist position was consolidated at the time of the famous Solvay conference, in October of 1927, and is firmly in place today. Such quantum nonrealism is part of what every graduate physicist learns and practices. It is the conceptual backdrop to all the brilliant success in atomic, nuclear, and particle physics over the past fifty years. Physicists have learned to think about their theory in a highly nonrealist way, and doing just that has brought about the most marvelous predictive success in the history of science.''  \cite[p. 1195]{PS}}
\end{quotation}
Even though we agree with Fine's historical account of the transformation of physics, we strongly disagree with his conclusions. It is certainly true that physicists learned through the work of Bohr to think about theories in a highly nonrealist way, but this new praxis did not produce ``the most marvelous predictive success in the history of science'', but ---on the very contrary--- the destruction of the guiding principles of physics as well as the the prohibition of essential lines of research. A very good example is the banning of Einstein's critical work in QM which led to the prohibition of investigating quantum entanglement, a notion which remained erased from physics for almost half a century. On the contrary, going back to the realist program of science, the logos approach re-connects with the search for theoretical (formal-conceptual) unity within experience through the guidance of the principles of operational-invariance and objectivity. Thus, escaping complementarity and the relativist account of states, objects and outcomes, we will now consider the basis and factorization invariance theorems (section 4) as general equivalence relations of the theory of quanta. 

\smallskip 

Given two ISAs $\Psi_1$, $\Psi_2$ and two bases $B_1$ and $B_2$ of the same space we can define four different EAs, $\EA_{\Psi_1,B_1}^{N;i_1\dots i_s},
\EA_{\Psi_1,B_2}^{N;i_1\dots i_s},\EA_{\Psi_2,B_1}^{N;i_1\dots i_s}$ and $\EA_{\Psi_2,B_2}^{N;i_1\dots i_s}$. Notice that all four EAs are different, but in different ways. The EAs $\EA_{\Psi_1,B_1}^{N;i_1\dots i_s}$ and $\EA_{\Psi_1,B_2}^{N;i_1\dots i_s}$ are different but can be converted from one to another by a change of basis (rearranging the screens and detectors). This means they can be understood as making reference to the same (intensive) state of affairs. The same counts for $\EA_{\Psi_2,B_1}^{N;i_1\dots i_s}$ and $\EA_{\Psi_2,B_2}^{N;i_1\dots i_s}$. However, the experimental arrangements $\EA_{\Psi_1,B_1}^{N;i_1\dots i_s}$  and $\EA_{\Psi_2,B_1}^{N;i_1\dots i_s}$ even though have the same basis, are related to different states, this means there is no obvious way of converting one to the other through a transformation (same for $\EA_{\Psi_1,B_2}^{N;i_1\dots i_s}$  and $\EA_{\Psi_2,B_2}^{N;i_1\dots i_s}$).
\[
\xymatrix{
\EA_{\Psi_1,B_1}^{N;i_1\dots i_s}\ar@{}[r]|\equiv\ar@{}|{\not\equiv}[d]&\EA_{\Psi_1,B_2}^{N;i_1\dots i_s}\ar@{}|{\not\equiv}[d]\\
\EA_{\Psi_2,B_1}^{N;i_1\dots i_s}\ar@{}[r]|\equiv&\EA_{\Psi_2,B_2}^{N;i_1\dots i_s}
}
\]

\smallskip

Now, in the light of this new conceptualization of the mathematical formalism, we can rephrase the basis and factorization invariance theorems in the following intuitive terms:
\begin{theorem}{\sc (Basis Invariance Theorem)}
Given a specific quantum laboratory, all experimental arrangements of the same degree are equivalent. 
\end{theorem}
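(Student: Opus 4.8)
The plan is to recognize that the experimental arrangement $\EA_{\Psi,B}^{N;i_1\dots i_n}$ is nothing other than the matrix representation, in the basis $B$, of the density operator $\rho$ that the earlier bijection between ISAs and trace-class positive operators associates to the Q-Lab $\Psi$. Indeed, the coefficient $\alpha_{k_1,\dots,k_n}^{k_1',\dots,k_n'}$ multiplying $|k_1\dots k_n\rangle\langle k_1'\dots k_n'|$ is, by construction, the matrix entry $\langle k_1\dots k_n|\,\rho\,|k_1'\dots k_n'\rangle$ delivered by the Born rule. Thus fixing the Q-Lab $\Psi$ fixes a single operator $\rho$, and an experimental arrangement is merely a choice of coordinates in which to display it.

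First I would observe that having the same degree means that the two bases $B$ and $B'$ have the same cardinality $N=i_1\cdots i_n$. Whatever factorization each basis carries, the labels $|k_1\dots k_n\rangle$ are only a relabelling of $N$ orthonormal vectors spanning one and the same space; the factorization is encoded in the bookkeeping of the indices but does not alter the span. Hence $B$ and $B'$ are two orthonormal bases of the same $N$-dimensional Hilbert space, and there exists a unitary $U$ sending $B$ to $B'$.

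The key step is then to invoke the \emph{Basis Invariance Theorem} already established in Section~4. Passing from $B$ to $B'$ transforms $\rho$ into $\rho'=U\rho U^\dag$ and each power $P$ into $P'=U^\dag P U$, and the computation
\[
Tr(\rho P)=Tr(U\rho' U^\dag P)=Tr(\rho' U^\dag P U)=Tr(\rho' P')
\]
shows that every potentia is carried consistently from one arrangement to the other. Therefore $\EA_{\Psi,B}^{N}$ and $\EA_{\Psi,B'}^{N}$ are inter-convertible by the single unitary $U$ ---exactly the relation $\equiv$ of the diagram above--- and since $B,B'$ were arbitrary bases of degree $N$, all such arrangements are mutually equivalent.

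The main obstacle I anticipate is conceptual rather than computational: one must argue carefully that ``same degree'' genuinely suffices for equivalence even when the two arrangements carry \emph{different} factorization patterns (e.g.\ one screen with $N$ detectors versus several screens whose detector counts multiply to $N$). The point to make rigorous is that the change-of-basis unitary exists purely because the total dimensions agree, so the differing factorization data is absorbed into the relabelling of basis vectors and imposes no further constraint. Once this is granted, the statement reduces cleanly to the Born-rule invariance already proved.
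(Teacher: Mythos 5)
Your proposal is correct and follows essentially the same route as the paper: the paper treats this theorem as a restatement of the Section~4 Basis Invariance Theorem, whose proof is exactly your argument ---identify the Q-Lab $\Psi$ with a single density operator $\rho$ via Gleason/Born, relate any two bases of the same degree $N$ by a unitary $U$, and use cyclicity of the trace, $Tr(\rho P)=Tr(\rho' P')$, to carry every potentia from one arrangement to the other. Your closing observation, that a differing factorization pattern with the same total degree is absorbed into the relabelling of the $N$ basis vectors, is likewise how the paper handles it (its relation $\EA_{\Psi,B}^{N;i_1\dots i_n}\equiv \EA_{\Psi,B''}^{N;j_1\dots j_{n'}}$ is derived from the same theorem), so no new ingredient is needed.
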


\begin{theorem} {\sc (Factorization Invariance Theorem)}
The experiments performed with an 
experimental arrangement of degree $N$
can also be performed with an  experimental arrangement
of degree $N+M$.
\end{theorem}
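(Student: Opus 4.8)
The plan is to exhibit, for a given experimental arrangement of degree $N$, a larger arrangement of degree $N+M$ from which the original one is recovered as a \emph{shadow}, and then to invoke the (formal) Factorization Invariance Theorem to conclude that the intensities---and hence the experiments---agree over the common powers. Recall that a degree-$N$ arrangement $\EA_{\Psi,B}^{N;i_1\dots i_n}$ is nothing but a density matrix $\rho$ on a Hilbert space $H$ with $\dim H = N = i_1\cdots i_n$, so the task reduces to realizing $\rho$ as the partial trace of a density matrix on a space of dimension $N+M$.

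First I would fix a screen with $k$ detectors, i.e.\ an auxiliary factor $\mathbb{C}^{k}$, together with a reference state $\sigma$ on it (for instance a single detector carrying the whole potentia, $\sigma=|1\rangle\langle 1|$). Setting $H'=H\otimes\mathbb{C}^{k}$ we have $\dim H' = Nk = N+M$ with $M=N(k-1)$, and I would define the enlarged arrangement by the product density matrix $\rho'=\rho\otimes\sigma$. Since $\rho'$ is positive semi-definite, self-adjoint and of unit trace, the bijection between trace-class density operators and ISAs established earlier guarantees that $\rho'$ determines a genuine ISA $\Psi'$, hence a legitimate degree-$(N+M)$ arrangement $\EA_{\Psi',B'}^{N+M}$ on the factorization $\mathbb{C}^{i_1}\otimes\dots\otimes\mathbb{C}^{i_n}\otimes\mathbb{C}^{k}$.

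The key computation is that the shadow of $\Psi'$ on the original screens returns the original arrangement: taking the partial trace over the added factor gives
\[
\mathrm{Tr}_{\mathbb{C}^{k}}(\rho\otimes\sigma)=\rho\,\mathrm{Tr}(\sigma)=\rho .
\]
Thus $\rho$ is the shadow of $\rho'$ in the precise sense of the shadow definition, the recovery map being the partial trace $T:B(H')\to B(H)$, which is completely positive and trace-preserving. Consequently every power $|k_1\dots k_n\rangle\langle k_1\dots k_n|$ of the original arrangement carries exactly the same potentia in $\Psi'$ as in $\Psi$, and by the Factorization Invariance Theorem this agreement is independent of which factorization of $H'$ one uses to take the shadow. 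Hence any intensity measurable with $\EA_{\Psi,B}^{N;i_1\dots i_n}$ is already measurable with $\EA_{\Psi',B'}^{N+M}$, the added screen simply registering the fixed potentia of $\sigma$. For a perfectly arbitrary increment $M$ (not necessarily a multiple of $N$) I would instead use the direct-sum embedding $H'=H\oplus\mathbb{C}^{M}$ with $\rho'=\rho\oplus 0_{M}$, i.e.\ $M$ extra detectors of zero potentia, whose compression by the projection onto $H$ likewise returns $\rho$.

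The main obstacle---and the point that needs care rather than calculation---is establishing genuine \emph{operational} equivalence rather than a mere matrix identity: one must check that adjoining the extra screen leaves the potentia of every original power strictly unchanged, which is exactly what the product (respectively block-diagonal) structure of $\rho'$ guarantees, and that the map recovering the degree-$N$ arrangement is a completely positive trace-preserving map (the partial trace, i.e.\ the shadow), so that the Factorization Invariance Theorem applies verbatim. Reconciling the additive bookkeeping of the degree ``$N+M$'' with the multiplicative nature of factorizations is the only delicate modelling choice, and it is resolved by allowing either the tensor (multiplicative) or the direct-sum (additive) enlargement according to the value of $M$.
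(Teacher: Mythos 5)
Your core construction is correct, but it takes a genuinely different route from the paper. The paper gives this operational statement no separate proof at all: it is presented as a rephrasing, through the Section-5 dictionary (experimental arrangement = density matrix in a chosen basis/factorization, shadow = partial trace, degree = number of powers), of the formal Factorization Invariance Theorem of Section 4, whose entire proof is the functoriality line $\Psi_1'=T'_*(\Psi)=(UT)_*(\Psi)=U_*T_*(\Psi)=U_*(\Psi_1)$. That is, the paper works top-down inside a single fixed Q-Lab: the degree-$N$ and degree-$(N+M)$ arrangements are both already given as shadows of the same ISA $\Psi$, and the theorem expresses that passing to a further shadow (reducing the degree) cannot add or change knowledge. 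You work bottom-up: taking only the degree-$N$ arrangement as given, you dilate it by tensoring with an ancilla screen in the pure state $\sigma=|1\rangle\langle 1|$ and recover the original via $\mathrm{Tr}_{\mathbb{C}^k}(\rho\otimes\sigma)=\rho$. Your route buys an explicit existence statement the paper never constructs (every arrangement embeds as a shadow of a strictly larger one, the extra screen registering a fixed potentia and hence adding nothing), and it makes the operational slogan concrete; the paper's route buys the lab-internal consistency claim it actually cares about---that the choice of factorization has no incidence on the intensities of the same ISA---without importing any auxiliary state.

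Two caveats deserve naming. First, your enlarged arrangement carries a different ISA, $\Psi'$ given by $\rho\otimes\sigma$, i.e.\ a different Q-Lab, whereas the paper's relation $\EA_{\Psi}^N\Rrightarrow\EA_{\Psi}^{N+M}$ keeps the subscript $\Psi$ fixed; strictly, you prove a dilation companion of the intended lab-internal statement. Second, and more concretely, your direct-sum fallback for arbitrary $M$ contains a step that fails as written: the compression $X\mapsto PXP$ from $B(H\oplus\mathbb{C}^M)$ to $B(H)$ is completely positive but \emph{not} trace-preserving (any state supported on the orthogonal complement is sent to an operator of trace zero), so the formal Factorization Invariance Theorem does not apply ``verbatim'' to it. The repair is easy---use the genuine CPTP map $X\mapsto PXP+\mathrm{Tr}\bigl((I-P)X\bigr)\,\tau$ for a fixed state $\tau$ on $H$, which still returns $\rho$ on $\rho\oplus 0_M$---but note also that a direct sum is not a factorization, so this branch sits outside the paper's screens-and-shadows framework altogether. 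Since the paper's additive bookkeeping ``$N+M$'' is itself loose (shadows relate degrees multiplicatively), your tensor branch alone already matches the theorem's intended content.
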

\noindent While the {\it Basis Invariance Theorem} implies that the knowledge we obtain from  the intensities in one experimental arrangement is equivalent to the knowledge obtainable in any other experimental arrangement of the same degree, the {\it Factorization Invariance Theorem} tells us that reducing the complexity of an experimental arrangement  will not increase our knowledge in any way.
In the light of these Theorems, let us make some comments 
regarding the notation $\EA_{\Psi,B}^{N;i_1\dots i_n}$.
First of all, choosing a different basis $B'$ in the same factorization,
by the Basis Invariance Theorem, we obtain an equivalent
experimental arrangement, $\EA_{\Psi,B'}^{N;i_1\dots i_n}$,
\[
\EA_{\Psi,B}^{N;i_1\dots i_n}\equiv \EA_{\Psi,B'}^{N;i_1\dots i_n}.
\]
Hence, we denote $\EA_{\Psi}^{N;i_1\dots i_n}$ to an experimental arrangement with
a fixed number of screens and detectors but without specifying which detectors are chosen.
Another possibility regarding the notation is to choose a different factorization
of the same vector space $N=j_1\dots j_{n'}$ (and consequently another set of powers $B''$). Then, in this situation we also have by the Basis Invariance Theorem
\[
\EA_{\Psi,B}^{N;i_1\dots i_n}\equiv \EA_{\Psi,B''}^{N;j_1\dots j_{n'}}.
\]
Hence, we denote $\EA_{\Psi}^N$ to an experimental arrangement of degree $N$
without specifying which screens and which detectors are chosen. The last 
possibility to consider comes from the relation between experimental arrangements
of different degrees of complexity. 
By the Factorization Invariance Theorem, we have the following relation
\[
\EA_{\Psi}^N\Rrightarrow\EA_{\Psi}^{N+M}.
\]
Hence, we denote $\EA_{\Psi}$ to an experimental arrangement of any degree of complexity.
\begin{remark}
Given a factorization there are multiple bases that can be chosen in each factor. However, a basis of the whole space determines a specific factorization uniquely. 
\[
\text{Factorization}\not\Rightarrow\text{Basis}\]
\[
\text{Basis}\Rightarrow\text{Factorization}\quad 
\]
In conceptual terms, given an $\EA_{\Psi}^{N}$ of some quantum laboratory $\Psi$ 
there are different possible experimental arrangements that can be constructed but none which is uniquely determined. However, given an $\EA_{\Psi,B}^{N;i_1\dots i_n}$ it is possible to determine all other experimental arrangements $\EA_{\Psi}^{N}$ that can be constructed.
\[
\EA_{\Psi}^{N}\not\Rightarrow\EA_{\Psi,B}^{N;i_1\dots i_n}\]
\[
\EA_{\Psi,B}^{N;i_1\dots i_n}\Rightarrow\EA_{\Psi}^{N}\quad 
\]
\end{remark}

\begin{corollary}
Given a Quantum Lab, all experimental arrangements are Globally Consistent.
\end{corollary}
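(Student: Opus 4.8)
The plan is to read \emph{Globally Consistent} as the statement that every experimental arrangement drawn from a fixed quantum lab $\Psi$ refers back, without contradiction, to the single intensive state of affairs $\Psi$, so that any two arrangements are either directly equivalent or related by a consistent shadow map. Since a Q-Lab is by definition an ISA, the earlier Theorem (via Gleason's theorem and Born's rule) already supplies a density matrix $\rho$ determined bijectively by $\Psi$. Every arrangement $\EA_{\Psi,B}^{N;i_1\dots i_n}$ is then nothing but the reading of the intensities $\mathrm{Tr}(\rho P)$ off this single $\rho$ in a chosen basis and factorization. This is the anchor of the argument: all arrangements are coordinate-dependent presentations of one invariant object, and the task reduces to checking that no pair of such presentations can conflict.

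First I would dispose of the same-degree case. Two arrangements of equal degree $N$, whether they differ only in basis, $\EA_{\Psi,B}^{N;i_1\dots i_n}\equiv\EA_{\Psi,B'}^{N;i_1\dots i_n}$, or also in factorization, $\EA_{\Psi,B}^{N;i_1\dots i_n}\equiv\EA_{\Psi,B''}^{N;j_1\dots j_{n'}}$, are equivalent by the \emph{Basis Invariance Theorem}, since the factorization data is already encoded in the choice of basis and a change of basis is implemented by a unitary $U$ with $\rho'=U\rho U^\dag$. This is exactly the commuting square displayed above, specialized to a fixed $\Psi$: the horizontal arrows are equivalences.

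Next I would handle arrangements of different degrees, where the relevant relation is the shadow (partial-trace) map. By the \emph{Factorization Invariance Theorem} we have $\EA_{\Psi}^N\Rrightarrow\EA_{\Psi}^{N+M}$, and the preceding Corollary guarantees that all factorizations, and hence all shadows, are compatible with respect to the same ISA. Thus any two arrangements of different degrees are linked by a completely positive trace-preserving map descending from the same $\Psi$, so no inconsistency can arise between them. Combining the two cases yields the result: for any pair of arrangements of the quantum lab $\Psi$, either they share a degree and are equivalent, or they differ in degree and are related by a consistent shadow, and in every case both are determined by the single $\rho$ associated to $\Psi$.

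The hard part will not be computational but definitional. The corollary leans on the informal notion of \emph{Global Consistency}, so the real content of the proof is to make precise that this means \emph{jointly derivable from a single ISA}, after which the statement is an immediate amalgamation of the two invariance theorems and their corollary. I would therefore devote the main effort to fixing that definition cleanly, so that the exhaustive dichotomy (same degree versus different degree) visibly covers every pair of arrangements and leaves no case in which two readings of $\Psi$ could disagree.
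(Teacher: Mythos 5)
Your proposal is correct and follows essentially the same route as the paper: the paper's own proof is the one-line ``Follows from the previous theorems,'' i.e.\ exactly the combination of the Basis Invariance Theorem (same-degree arrangements are equivalent) and the Factorization Invariance Theorem (different degrees are related by shadows) that you spell out. Your elaboration --- the same-degree/different-degree dichotomy, the anchoring of every $\EA_{\Psi,B}^{N;i_1\dots i_n}$ in the single density matrix given by Gleason's theorem, and the observation that ``Globally Consistent'' is never formally defined and must be read as \emph{jointly derivable from a single ISA} --- simply makes explicit what the paper leaves implicit.
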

\begin{proof}
Follows from the previous theorems.
\end{proof}

We reach in QM a new form of equivalence determined by the role of bases as reference frames. While in the classical case a reference frame must be understood as a formal perspective providing the specific values of all the properties which determine a given state of affairs and these values can be translated via Galilean transformation to any other reference frame, what is truly new in the quantum case is that while a basis provides the intensive values only of a specific set of powers the transformation to a new basis allows to find out the specific intensive values of a different set of powers of the same (intensive) state of affairs.

\section*{Conclusions}

In this paper we provided a consistent and coherent objective-invariant account of bases and factorizations through the reference to an intensive ---rather than binary--- state of affairs. We did so through the addition of new physical concepts such as power of action, quantum laboratory, experimental arrangement, etc. With the aid of these concepts we provided a new, more intuitive, account of the basis and factorization invariance theorems. Furthermore, we provided a deeper understanding of the complex relation between factorizations and bases and between a quantum lab and the different experimental arrangements that can be built within it.

\section*{Acknowledgements} 

This work was partially supported by the following grants:  ANID-FONDECYT, Project number: 3240436.

\bibliographystyle{plain}

\end{document}